\documentclass[10pt,twocolumn,letterpaper]{article}
\PassOptionsToPackage{table}{xcolor}

\usepackage{cvpr}              % To produce the CAMERA-READY version
\usepackage{multirow} 
\usepackage{arydshln} % 支持 \cdashline 和 \hdashline
\usepackage{amsmath,amssymb,amsthm}
\usepackage{mathtools}   % refinements to amsmath (e.g., \coloneqq, \DeclarePairedDelimiter)
\usepackage{bm}
\usepackage{pifont}
\usepackage{subcaption}

\usepackage{booktabs}
\usepackage{caption}

\usepackage{array,booktabs,xcolor}

\definecolor{OursRow}{HTML}{E6F0FF}

\newtheorem{thm}{Theorem}[section]
\newtheorem{mydef}[thm]{Definition}

\newtheorem{myrem}{Remark}
\newtheorem{myprop}[thm]{Proposition}

\newcolumntype{V}{!{\vrule width 1.pt}}  % 整列粗竖线

\definecolor{cvprblue}{rgb}{0.21,0.49,0.74}
\usepackage[pagebackref,breaklinks,colorlinks,allcolors=cvprblue]{hyperref}

\def\paperID{20852} % *** Enter the Paper ID here
\def\confName{CVPR}
\def\confYear{2026}

\title{Confusion-Aware Spectral Regularizer for Long-Tailed Recognition}

\author{
Ziquan Zhu$^{1,*}$ \quad
Gaojie Jin$^{2,*}$ \quad
Hanruo Zhu$^{1,*}$ \quad
Si-Yuan Lu$^{3,*}$ \quad
Yunxiao Zhang$^{2}$\\
Zeyu Fu$^{2}$ \quad
Ronghui Mu$^{2}$ \quad
Guoqiang Zhang$^{2}$ \quad
Zhao Sun$^{4}$ \quad
Yuhang Xia$^{5}$\\
Jiaxing Shang$^{2,6}$ \quad
Xiang Li$^{7}$ \quad
Lu Liu$^{2}$ \quad
Tianjin Huang$^{2,8,\dagger}$\\[0.4em]
$^{1}$University of Leicester \quad
$^{2}$University of Exeter \quad
$^{3}$Nanjing University of Posts and Telecommunications \\
$^{4}$Zhengzhou University \quad 
$^{5}$Chengdu University of Technology \quad
$^{6}$Chongqing University \\
$^{7}$University of Bristol \quad
$^{8}$Eindhoven University of Technology\\
{$^{*}$Equal contribution \hspace{1em} $^{\dagger}$Corresponding author: T.Huang2@exeter.ac.uk.}
}

\begin{document}
\maketitle
\begin{abstract}
Long-tailed image classification remains a long-standing challenge, as real-world data typically follow highly imbalanced distributions where a few head classes dominate and many tail classes contain only limited samples. This imbalance biases feature learning toward head categories and leads to significant degradation on rare classes. Although recent studies have proposed re-sampling, re-weighting, and decoupled learning strategies, the improvement on the most underrepresented classes still remains marginal compared with overall accuracy.
In this work, we present a confusion-centric perspective for long-tailed recognition that explicitly focuses on worst-class generalization. 
We first establish a new theoretical framework of class-specific error analysis, which shows that the worst-class error can be tightly upper-bounded by the spectral norm of the frequency-weighted confusion matrix and a model-dependent complexity term. Guided by this insight, we propose the \textbf{C}onfusion-\textbf{A}ware Spectral \textbf{R}egularizer (\texttt{CAR}) that minimizes the spectral norm of the confusion matrix during training to reduce inter-class confusion and enhance tail-class generalization. To enable stable and efficient optimization, \texttt{CAR} integrates a Differentiable Confusion Matrix Surrogate and an EMA-based Confusion Estimator to maintain smooth and low-variance estimates across mini-batches. 
Extensive experiments across multiple long-tailed benchmarks demonstrates that \texttt{CAR} substantially improves both worst-class accuracy and overall performance. When combined with ConCutMix augmentation, \texttt{CAR} consistently surpasses exisiting state-of-the-art long-tailed learning methods under both the training-from-scratch setting (by $2.37\% \sim 4.83\%$) and the fine-tuning-from-pretrained setting (by $2.42\% \sim 4.17\%$) across ImageNet-LT, CIFAR100-LT, and iNaturalist datasets.
Code is available at \url{https://github.com/misswayguy/CAR}.

\end{abstract}
    
\section{Introduction}
\label{sec:intro}

Image classification has achieved remarkable progress in recent years, mainly due to the success of deep learning and large-scale balanced datasets~\cite{rawat2017deep}. However, data collected from real-world scenarios rarely follow such balanced distributions. Instead, they usually exhibit a long-tailed distribution, where a few head classes contain abundant samples while most tail classes have only limited instances~\cite{zhang2023deep,zhang2025systematic}. This severe imbalance causes biased feature learning—models tend to focus on head classes while failing to capture meaningful representations for tail ones~\cite{feldman2020does,hou2023subclass}. 
This skewed distribution induces optimization bias toward head classes and degrades performance on tail categories, making long-tailed recognition a central challenge in practice. To mitigate the negative effects of long-tailed distributions, numerous approaches have been explored in recent years. Data-level strategies such as re-sampling~\cite{kang2019decoupling,wang2020devil,buda2018systematic} aim to balance the number of samples per class by over- or under-sampling the training data. Loss-level adjustments, including re-weighting~\cite{cui2019classbalanced,lin2017focal,ren2020balanced} and logit adjustment~\cite{menon2020long,tian2020posterior,wu2021adversarial}, attempt to compensate for class imbalance by assigning adaptive weights or biases to the loss function. At the model level, representation enhancement and decoupled learning methods~\cite{dong2022lpt,shi2023long} focus on separating feature learning from classifier optimization to reduce the dominance of head classes.

Despite steady progress in long-tailed learning, performance on the worst-performing classes remains substantially inferior to overall accuracy. As shown in Figure~\ref{fig:worst_overall_bars}, we observe two critical gaps: \ding{182} worst-class test accuracy lags significantly behind overall test accuracy, and \ding{183} worst-class test accuracy falls substantially short of worst-class training accuracy.  These disparities reveal a fundamental limitation that existing approaches fail to effectively generalize to the most challenging tail categories, even when they fit the training data well.

\begin{figure*}[htb]
  \centering
  \captionsetup{aboveskip=2pt, belowskip=0pt} % 仅对本图生效
  \includegraphics[width=0.85\linewidth]{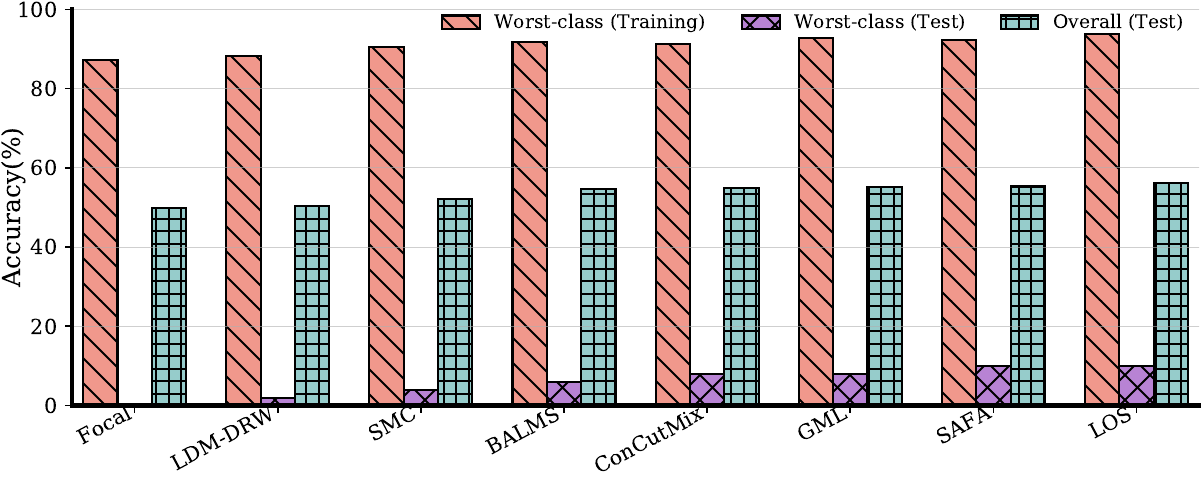}
  \vspace{-4pt}
  \caption{Poor generalization of worst-class performance in existing long-tailed learning methods. Experiments are conducted on ImageNet-LT using ViT-Small as the backbone. The three bars for each method correspond to the worst-class accuracy on the training set (left), the worst-class accuracy on the test set (middle), and the overall test accuracy (right).} 
  \label{fig:worst_overall_bars}
  \vspace{-8pt}
\end{figure*}

To bridge this gap, we propose a new confusion-centric perspective that explicitly regularizes spectral norm of the weighted confusion matrix and focuses on improving worst-class performance. Specifically, we introduce a weighted worst-class error metric that integrates frequency priors to amplify the influence of minority classes. We further develop a generalization upper bound for the class-specific error based on the PAC-Bayesian framework~\cite{mcallester1999pac,morvant2012pac,neyshabur2017pac}. This generalization bound reveals that the worst-class error can be bounded by two key components: (i) the spectral norm of the weighted empirical confusion matrix, and (ii) a model- and data-dependent complexity term. Building on this theoretical insight, we propose a confusion-aware spectral regularizer (\texttt{CAR}) that directly minimizes the spectral norm of the frequency-weighted confusion matrix during training. To enable efficient optimization, we introduce a differentiable confusion matrix surrogate combined with an exponential moving average (EMA) mechanism that maintains stable and efficient estimates across mini-batches.

\noindent Our efforts are unfolded with the following four
thrusts:
\begin{itemize}[leftmargin=*]
    \item[$\star$] \textbf{(Theoretical Analysis)} We establish a novel confusion-centric perspective for long-tailed recognition. Building on the PAC-Bayesian theory, we derive a new upper bound showing that the \emph{worst-class error} can be tightly controlled by the spectral norm of the frequency-weighted confusion matrix, offering a principled route toward improving worst-class generalization. 
    
    \item[$\star$]\textbf{(Algorithm)} Guided by the theoretical insights, we propose the practical Confusion-Aware Spectral Regularizer (\texttt{CAR}). \texttt{CAR} introduces two key components: \textbf{(i)} a \emph{Differentiable Confusion Matrix Surrogate} that replaces non-differentiable indicators with smooth approximations, and \textbf{(ii)} an \emph{EMA-based Confusion Estimator} that stabilizes optimization by maintaining low-variance estimates of confusion matrix. 

    \item[$\star$] \textbf{(Experiments)} We conduct extensive experiments across diverse long-tailed benchmarks, architectures, and imbalance factors. Results consistently demonstrate that \texttt{CAR} achieves superior head–tail balance, and improves both \textit{worst-class} accuracy and \textit{overall} performance. For example, when combined with ConCutMix augmentation, \texttt{CAR} surpasses the previous state-of-the-art LOS~\cite{sunrethinking} by $2.37\% \sim 4.83\%$  under the training-from-scratch setting across CIFAR100-LT, ImageNet-LT, and iNaturalist.

    \item[$\star$] \textbf{(Extra Findings)} We further observe that \texttt{CAR} can be seamlessly integrated with existing data-augmentation-based long-tailed learning methods, leading to additional performance gains on both head and tail classes.
\end{itemize}

% -----------------------------

\section{Preliminaries}

\paragraph{Notation.}
Let $\mathcal{X}\subset\mathbb{R}^d$ denote the input space and $\mathcal{Y}=\{1,\dots,K\}$ the label set. 
A training set $\mathcal{S}=\{(x_q,y_q)\}_{q=1}^{m}$ is drawn from the imbalanced distribution $\mathcal{D}$. 
$m$ is the number of training samples and $K$ is the number of classes. 
A classifier $f:\mathcal{X}\to\mathbb{R}^{K}$ outputs predicted probability vector $f(x)$, and the prediction is $\hat{y}(x)=\arg\max_{i} f(x)[i]$. 
For a matrix $\mathbf{A}=(a_{ij})$, we use $\|\mathbf{A}\|_{1}=\max_{j}\sum_{i}|a_{ij}|$ and $\|\mathbf{A}\|_{2}$ for the spectral norm.

\paragraph{Confusion Matrice.}
The population off-diagonal confusion matrix $\mathbf{C}_\mathcal{D}^f\in\mathbb{R}^{K\times K}$ is defined as
\begin{equation}
c_{ij}=\mathbb{P}_{(x,y)\sim\mathcal{D}}\!\left(\hat{y}(x)=i \mid y=j\right),\qquad c_{jj}=0.
\end{equation}
The column sum $\sum_{i}c_{ij}$ corresponds to the class-conditional error for class $j$. 
Its empirical analogue $\mathbf{C}_S^f$ replaces probabilities with sample averages:
\begin{equation}
\hat c_{ij}=\frac{1}{m_j}\sum_{q:y_q=j} \mathbf{1}\!\left(\hat{y}(x_q)=i\right),\qquad \hat c_{jj}=0,
\end{equation}
where $m_j$ is the number of samples with label $j$ in the training set $\mathcal{S}$. In previous works such as~\cite{farnia2018generalizable,neyshabur2017pac}, the PAC-Bayesian generalization analysis for a DNN is conducted on the margin loss. Following the margin setting, we consider any positive margin $\gamma$ in this work and define the empirical margin confusion matrix $\mathbf{C}_{S,\gamma}^f=(\hat{c}_{ij}^{\gamma})$ as:
{\small
\begin{equation}
\hat c^{\,\gamma}_{ij}=
\begin{cases}
0, \qquad\qquad\qquad\qquad \qquad\qquad\qquad\qquad \qquad\;\;\;\;\; i=j, \\[2pt]
\displaystyle
\sum_{q:y_q=j} \frac{1}{m_j}\,
\mathbf{1}\!\big(f_w(\mathbf{x}_q)[y_q] \le \gamma + f_w(\mathbf{x}_q)[i]\big)
\\[-2pt]\qquad\qquad\qquad\qquad
\times\,
\mathbf{1}\!\big(\underset{i'\ne y_q}{\arg\max}\, f_w(\mathbf{x}_q)[i'] = i\big),\notag
 \;\;\text{else.}
\end{cases}
\label{eq:margin-conf} 
\end{equation}
}

\noindent
where $\mathbf{1}[a \le b] = 1$ if $a<b$, else $\mathbf{1}[a \le b] = 0$.
% -----------------------------

\section{Weighted Worst-Class Error}

\textbf{Rationale.} Long-tail datasets exhibit severe imbalance.  Frequent classes dominate risk minimization while rare classes are poorly controlled.  To counteract this, we introduce a frequency-dependent weighting that amplifies the influence of minority classes in the confusion-based analysis.

\noindent We define the class-wise weight as \( \lambda_j = (m_j + r_0)^{-1/2} \), 
where \( r_0 > 0 \) is a smoothing factor and \( m_j \) denotes the relative frequency of class $j$, i.e., the ratio between the number of samples in class $j$ and the total number of samples in the dataset.
The diagonal weighting matrix is then given by \( \mathbf{\Lambda} = \mathrm{diag}(\lambda_1, \dots, \lambda_K) \).

\begin{mydef}[weighted worst-class error]
\begin{equation}
\mathtt{WCE}(f)=\big\|\mathbf{C}_\mathcal{D}^f \mathbf{\Lambda}\big\|_{1}
=\max_{j}\; \lambda_j \sum_{i} c_{ij}.
\end{equation}
\end{mydef}

\noindent This criterion emphasizes classes with fewer samples by scaling their conditional errors with $\lambda_j$. 

% In practice, we adopt a differentiable surrogate
% \begin{equation}
% \widetilde{\mathrm{WCE}}_{\tau,p}(f)=\big\|\widetilde{C}_{S,\tau}^f W\big\|_{1},
% \end{equation}
% where $\widetilde{C}_{S,\tau}^f$ denotes a smoothed estimator (defined in Sec.~\ref{sec:method}).  
% The spectral norm ($p=2$) is especially suitable since $\|A\|_{1}\le \sqrt{K}\|A\|_{2}$, yielding a smooth alternative to the sharp $\ell_1$ form.

\begin{myprop}[\textbf{Upper Bound}]
\label{prop:weighted-bound}
Consider a training set $\mathcal{S}$ with $m$ samples drawn from a distribution $\mathcal{D}$ over $\mathcal{X} \times \mathcal{Y}$. 
Let $B$ represent the largest $\ell_2$ norm of input samples.
For any $B,n,h>0$, let the base classifier $f_{w}:\mathcal{X}\to\mathcal{Y}$ be an $n$-layer feedforward network with $h$ units per layer and ReLU activation.
%and $w_{min}$ be the class-wise weight. 
Then, for any $\delta\in(0,1)$ and margin $\gamma>0$, with probability at least $1-\delta$ over $S\sim\mathcal{D}^{m}$, we have upper bound for class-specific error $e_j$:
\small
\begin{equation}
\begin{aligned}
e_j &\leq \frac{1}{\lambda_{j}}\bigl\| \mathbf{C}_{\mathcal D}^{f} \mathbf{\Lambda} \bigr\|_{1}
\;\\
&\le\;
\frac{\nu}{\lambda_{j}}\,\underbrace{\bigl\| \mathbf{C}_{\mathcal{S},\gamma}^{f} \mathbf{\Lambda} \bigr\|_{2}}_{\textbf{Empirical spectral norm}}
\;+\;
\underbrace{\mathcal{E}(f,\mathcal{S},\gamma,\delta)}_{\substack{\textbf{Model and training}\\\textbf{set dependence}}}, \forall j,
\end{aligned}
\end{equation}
where $\nu$ is a positve constant which depend on K. $\mathcal{E}(f,\mathcal{S},\gamma,\delta)$ is
\small
\begin{equation}
\mathcal{O}\left(\sqrt{\frac{K}{(m_{min} - 8K)\gamma^2} \left[ \Psi(f_w) + \ln \left( \frac{n m_{min}}{\delta} \right) \right]}\right),
\end{equation}
where $K$ is the number of classes, $m_{min}$ represents the minimal number of examples from $\mathcal{S}$ which belong to the same class, $\Psi(f_w)=B^2n^2h\ln(nh)\prod_{l=1}^n \|W_l\|_2^2 \sum_{l=1}^n \frac{\|W_l\|^2_F}{\|W_l\|^2_2}$, $W_l$ represents the $l$-th weight matrix.
\end{myprop}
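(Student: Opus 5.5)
The first inequality is immediate from the definition of $\mathtt{WCE}$. Writing $e_j=\sum_i c_{ij}$ for the class-conditional error of class $j$, we have $\lambda_j e_j=\lambda_j\sum_i c_{ij}\le \max_{j'}\lambda_{j'}\sum_i c_{ij'}=\|\mathbf{C}_{\mathcal D}^f\mathbf{\Lambda}\|_1$. Dividing by $\lambda_j>0$ gives $e_j\le \lambda_j^{-1}\|\mathbf{C}_{\mathcal D}^f\mathbf{\Lambda}\|_1$. All the work lies in the second inequality, which I would prove in three steps: a generalization step, a norm-comparison step, and a collection step.

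\textbf{Generalization step.} I would first prove a PAC-Bayesian bound for the confusion matrix itself, following the route of~\cite{morvant2012pac} combined with the perturbation analysis of~\cite{neyshabur2017pac}. Take the prior $P=\mathcal N(0,\sigma^2 I)$. Take the posterior $Q$ to be $w+u$ with $u\sim\mathcal N(0,\sigma^2I)$, where $\sigma$ is chosen, as in~\cite{neyshabur2017pac}, so that with probability at least $1/2$ the perturbation changes every logit by less than $\gamma/4$. On that event, any $\gamma$-margin confusion event for $f_w$ contains the corresponding $\gamma/2$-margin event of the perturbed net $f_{w+u}$, and any true confusion of $f_{w+u}$ is a $\gamma/2$-margin confusion of $f_w$. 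This lets us pass between the deterministic network and the Gibbs classifier with only a constant-factor loss.

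Next I apply the matrix PAC-Bayes inequality of~\cite{morvant2012pac} class by class. Each column $j$ is an average over the $m_j\ge m_{min}$ samples of class $j$, so we get, uniformly over $Q$ and with probability $1-\delta$,
\[
\|\mathbf{C}_{\mathcal D}^{Q}-\mathbf{C}_{\mathcal S}^{Q}\|_2\le \sqrt{\tfrac{8K}{m_{min}-8K}\bigl[\mathrm{KL}(Q\|P)+\ln\tfrac{m_{min}}{4\delta}\bigr]}.
\]
The factor $m_{min}-8K$ is inherited directly from that matrix-concentration result.

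Then I bound the KL term. We have $\mathrm{KL}(Q\|P)=\|w\|^2/(2\sigma^2)$, and substituting the Neyshabur choice of $\sigma$ gives $\mathcal O(\Psi(f_w)/\gamma^2)$. The dependence of $\sigma$ on $\prod_l\|W_l\|_2$ is removed in the standard way: fix a finite grid of values of $\beta=(\prod_l\|W_l\|_2)^{1/n}$ and take a union bound over it. This is where the $\ln(n m_{min}/\delta)$ term comes from.

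\textbf{Norm-comparison step.} Right-multiplying by $\mathbf{\Lambda}$ only scales the columns, so by submultiplicativity of the spectral norm the generalization bound transfers to $\|(\mathbf{C}_{\mathcal D}-\mathbf{C}_{\mathcal S,\gamma})\mathbf{\Lambda}\|_2$ with an extra factor $\max_j\lambda_j\le r_0^{-1/2}$, which is absorbed into the $\mathcal O$. I then pass to the $\ell_1$ norm using $\|A\|_1\le\sqrt K\|A\|_2$:
\[
\|\mathbf{C}_{\mathcal D}^f\mathbf{\Lambda}\|_1\le\sqrt K\,\|\mathbf{C}_{\mathcal D}^f\mathbf{\Lambda}\|_2\le \sqrt K\,\|\mathbf{C}_{\mathcal S,\gamma}^f\mathbf{\Lambda}\|_2+\sqrt K\,\|(\mathbf{C}_{\mathcal D}^f-\mathbf{C}_{\mathcal S,\gamma}^f)\mathbf{\Lambda}\|_2 .
\]
This identifies $\nu=\sqrt K$ up to constants.

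\textbf{Collection step.} Dividing by $\lambda_j$ gives the empirical spectral-norm term with coefficient $\nu/\lambda_j$. The remaining term is $\lambda_j^{-1}\sqrt K$ times the generalization gap, with $\lambda_j^{-1}\le(1+r_0)^{1/2}$ since $m_j\le1$ is a relative frequency. This prefactor is a constant, so the leftover term is the stated $\mathcal{E}$.

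The main obstacle is the generalization step. The matrix concentration of~\cite{morvant2012pac} is stated for Gibbs classifiers with a $0$--$1$ confusion loss, whereas here the margin confusion matrix involves the runner-up index $\arg\max_{i'\ne y}$. The perturbation argument therefore has to be checked entrywise: a perturbation of size below $\gamma/4$ must not move mass between off-diagonal entries in a way that increases the spectral norm. I would handle this by bounding the population confusion entrywise by the margin confusion column sums, and using that a column-wise entrywise domination, after the $\sqrt K$ slack, suffices at the level of the $\ell_1$ norm. That is the only level needed for the final $e_j$ bound.
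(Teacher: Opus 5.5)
Your first inequality and your list of ingredients match the paper's: Morvant's matrix PAC-Bayes bound, Neyshabur's choice of $\sigma$ and the grid over $\beta$, $\|A\|_1\le\sqrt K\|A\|_2$, and absorbing $\max_j\lambda_j$ and $\lambda_j^{-1}$ into constants. The gap is in how the norm-comparison step assembles them. You apply the triangle inequality to the deterministic pair $\mathbf{C}^f_{\mathcal D}\mathbf{\Lambda}$, $\mathbf{C}^f_{\mathcal S,\gamma}\mathbf{\Lambda}$ and claim the generalization bound controls $\|(\mathbf{C}^f_{\mathcal D}-\mathbf{C}^f_{\mathcal S,\gamma})\mathbf{\Lambda}\|_2$. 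Nothing in your generalization step gives this. Morvant's theorem controls only the difference of two matrices built from the same Gibbs classifier and the same margin, and the perturbation argument links $f_w$ to the Gibbs classifier only one-sidedly.

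A two-sided bound is in fact false, because the margin matrix deliberately over-counts. For fixed weights and $\gamma$ larger than every logit gap on $\mathcal S$, every column of $\mathbf{C}^f_{\mathcal S,\gamma}$ sums to $1$. Hence $\|\mathbf{C}^f_{\mathcal S,\gamma}\mathbf{\Lambda}\|_2\ge K^{-1/2}\max_j\lambda_j$, while $\mathbf{C}^f_{\mathcal D}$ may be nearly zero and $\mathcal E=\mathcal O(1/\gamma)$.

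The paper instead keeps the chain one-sided:
\[
\|\mathbf{C}^f_{\mathcal D}\|_2\le\|\mathbf{C}^{\tilde Q}_{\mathcal D,\frac{\gamma}{2}}\|_2\le\|\mathbf{C}^{\tilde Q}_{\mathcal S,\frac{\gamma}{2}}\|_2+\mathrm{gap}\le\|\mathbf{C}^f_{\mathcal S,\gamma}\|_2+\mathrm{gap}.
\]
The triangle inequality is applied only between the two Gibbs matrices. The comparisons $\mathbf{C}^f_{\mathcal D}\le\mathbf{C}^{\tilde Q}_{\mathcal D,\gamma/2}$ and $\mathbf{C}^{\tilde Q}_{\mathcal S,\gamma/2}\le\mathbf{C}^f_{\mathcal S,\gamma}$ (entrywise) are passed to spectral norms via the monotonicity of $\|\cdot\|_2$ on nonnegative matrices (Perron--Frobenius). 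Right-multiplication by the positive diagonal $\mathbf{\Lambda}$ preserves nonnegativity and entrywise order, so the chain survives with $\mathbf{\Lambda}$ inserted. This one-sided sandwich plus monotonicity is the idea you are missing.

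Note also that the first comparison needs ``a true confusion of $f_w$ forces a $\gamma/2$-margin violation of $f_{w+\tilde u}$''. The implication you state (a true confusion of $f_{w+u}$ is a $\gamma/2$-margin confusion of $f_w$) runs the other way and cannot upper-bound $\mathbf{C}^f_{\mathcal D}$.

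Two further points. First, comparisons valid only on the good perturbation event enter a Gibbs confusion matrix only if the posterior lives on that event. With your unrestricted Gaussian $Q$ (where $\mathrm{KL}(Q\|P)=\|w\|^2/(2\sigma^2)$), the empirical side picks up an additive term equal to the bad-event probability, up to $1/2$; this is not a constant-factor loss. The paper, following Neyshabur et al., conditions $u$ on the good set to get $\tilde Q$ and uses $\mathrm{KL}(\tilde Q\|P)\le 2(\mathrm{KL}(Q\|P)+1)$.

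Second, your worry that the perturbation can shift the runner-up index is well founded. It breaks exact entrywise domination, which the paper asserts without addressing this. Column sums, however, are immune to such shifts, so you can compare column sums at both ends. On the population side,
\[
\|\mathbf{C}^f_{\mathcal D}\mathbf{\Lambda}\|_1\le\|\mathbf{C}^{\tilde Q}_{\mathcal D,\frac{\gamma}{2}}\mathbf{\Lambda}\|_1\le\sqrt K\,\|\mathbf{C}^{\tilde Q}_{\mathcal D,\frac{\gamma}{2}}\mathbf{\Lambda}\|_2 ,
\]
and on the empirical side,
\[
\|\mathbf{C}^{\tilde Q}_{\mathcal S,\frac{\gamma}{2}}\mathbf{\Lambda}\|_2\le\sqrt K\,\|\mathbf{C}^{\tilde Q}_{\mathcal S,\frac{\gamma}{2}}\mathbf{\Lambda}\|_1\le\sqrt K\,\|\mathbf{C}^f_{\mathcal S,\gamma}\mathbf{\Lambda}\|_1\le K\,\|\mathbf{C}^f_{\mathcal S,\gamma}\mathbf{\Lambda}\|_2 .
\]
Joining these by the triangle inequality between the Gibbs matrices yields the proposition with $\nu=K^{3/2}$. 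Your proposal uses column sums only on the population side. Since the empirical term in the statement is a spectral norm, ``the $\ell_1$ level is the only one needed'' does not close the argument.
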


\noindent
\emph{Proof.} See Appendix \ref{appendix_proof}. 
\hfill $\square$

\begin{myrem}
The above result demonstrates that \textbf{(i)} the class-specific error is upper-bounded by the weighted worst-class error, \textbf{(ii)} this upper bound is tighter for classes with higher errors and looser for those with lower errors, and is exactly tight for the worst-class error; and \textbf{(iii)} the weighted worst-class error itself can be further bounded by two terms: \ding{182} the spectral norm of the weighted empirical confusion matrix, and \ding{183} a model- and data-dependent complexity term $\Psi(f_{w})$. 
While the second term has been extensively studied through techniques such as spectral normalization of weight matrices~\citep{yoshida2017spectral,farnia2018generalizable}, our contribution is to highlight the role of the spectral norm of the confusion-matrix. Controlling this spectral norm offers a complementary mechanism for improving worst-class generalization, particularly in imbalanced settings.
\end{myrem}

\section{\texttt{CAR}:Confusion-Aware Spectral Regularizer}
\label{sec:method}
The above analysis indicates that reducing the upper bound of the weighted worst-class error effectively decreases the error across all classes, particularly for the worst-performing class. Motivated by this, we introduce the empirical spectral norm term as a regularizer, while omitting the leading constant factor $\tfrac{\nu}{\lambda_j}$ for simplicity. Specifically, it is formulated as follows:
\begin{equation}
\mathcal{R}(f)=
\|\mathbf{C}_{\mathcal{S},\gamma}^f \mathbf{\Lambda}\|_{2}
\end{equation}
\noindent However, optimizing the model with respect to this regularizer presents two key challenges: $(1)$ the empirical margin confusion matrix is non-differentiable; and  
$(2)$ computing it over the entire training set \(\mathcal{S}\) in real time is computationally infeasible.  
To overcome these challenges, we introduce a \texttt{Differentiable Confusion Matrix Surrogate} and an \texttt{EMA-based Confusion Estimator}.
%where $\mathtt{CE}(\cdot)$ denotes the loss of cross entropy and $\lambda>0$ balances the standard cross entropy with the confusion aware regularizer. 

\noindent\textbf{Differentiable Confusion Matrix Surrogate.} The empirical confusion matrix ${\mathbf{C}}_{\mathcal{S},\gamma}^f$ involves non-differentiable indicator functions, making it unsuitable for gradient-based optimization. To address this, we introduce a differentiable surrogate formulation ${\tilde{\mathbf{C}}}_{\mathcal{S},\gamma}^f=(\tilde c_{ij})$:
% \begin{equation}
% \tilde c_{ij}=\frac{1}{m_j}\sum_{q:\,y_q=j}
% \mathrm{softmax}\!\left(f(x_q)\right)[i],
% \qquad \tilde c_{jj}=0.
% \end{equation}
\begin{align}
\label{eq:soft-margin-confusion}
\tilde c_{ij}
=\frac{1}{m_j}\sum_{q:y_q=j}
\underbrace{\sigma (\gamma+f_w(x_q)[i]-f_w(x_q)[j])}_{\textbf{soft margin gate}} \notag
\\\times\, \underbrace{\mathbf{S} ( f_w(x_q)-f_w(x_q)[j])[i]}_{\textbf{soft argmax over non-j}}
\end{align}
where $\sigma(\cdot)$ denotes the sigmoid function and $\mathbf{S}(\cdot)$ the softmax function.
The \textbf{soft margin gate} term softly evaluates whether class $i$ surpasses the ground-truth class $j$ by a margin $\gamma$, while the \textbf{soft argmax over non-\emph{j}} term provides a differentiable approximation of the most competitive class other than the true one.

\noindent\textbf{EMA-based Confusion Estimator.} Considering that training models is conducted with batchsize, a directly way to avoid the infeasible trainset level confusion matrix calcuation is to calcuate batchsize level confusion matrix. However, a single mini-batch may yields a high-variance estimate of the confusion matrix. To stablilize training without recomputing overal full trainset,  we maintain an \emph{exponential moving average} of the \emph{batch-level differentiable estimate}. Specficially, let $\mathcal{B}_t$ denote the mini-batch at iteration $t$ and $\beta\in[0,1)$ be a momentum parameter , the EMA of the batch-level confusion matrix is defined as:
\begin{equation}
\label{eq:ema-conf}
 \mathbf{\hat C_t}
\;=\;
\beta\,( \mathbf{\hat C}_{t-1})
\;+\;
(1-\beta)\, \mathbf{\tilde C}_{\mathcal{B}_t,\gamma}^f,
\qquad
\mathbf{\hat C_0} = 0.
\end{equation}

\noindent Only the current term $ \mathbf{\tilde C}_{\mathcal{B}_t,\gamma}^f$ carries gradients w.r.t.\ $w$; the history $ \mathbf{\hat C}_{t-1}$ is treated as a constant, preserving differentiability while reducing variance (See Appendix~\ref{appedix:stable} for a detailed stability analysis.)

\noindent The overall training objective $\mathcal{L}(f)$ combines the standard cross-entropy loss with the proposed confusion-aware regularization term, where $\alpha>0$ controls the strength of regularization.
\begin{equation}
\mathcal{L}(f)=
\frac{1}{m}\sum_{q=1}^{m}\mathtt{CE}\!\big(f(x_q),y_q\big)
+\alpha\,\|\!\big(\mathbf{\hat C_t} \mathbf{\Lambda}\big)\|_{2},
\end{equation}
where $\mathtt{CE}(\cdot)$ denote the cross-entropy loss.

\begin{table*}[tbh]
    \centering
    \caption{Top-1 accuracy (\%) comparison on ImageNet-LT, CIFAR100-LT (IF=100), and iNaturalist using ViT-Small.
    Results are reported for Head, Medium, Tail, and Overall.
    The best results are in \textbf{bold}, and the second-best are \underline{underlined}.}
    \setlength{\tabcolsep}{3.5pt}
    \renewcommand{\arraystretch}{0.95}
    \resizebox{0.99\textwidth}{!}{
    % 关键：用 V 放置“整列粗竖线”；表体内部无其它竖线
    \begin{tabular}{l r V cccc V cccc V cccc}
        \toprule
        \textbf{Methods} & \textbf{Venue} &
        \multicolumn{4}{c}{\textbf{ImageNet-LT}} &
        \multicolumn{4}{c}{\textbf{CIFAR100-LT}} &
        \multicolumn{4}{c}{\textbf{iNaturalist}} \\
        % 只画分块的横线；不对 Venue 画横线；不画 Head/Overall 的小竖线
        \cmidrule(lr){3-6}\cmidrule(lr){7-10}\cmidrule(lr){11-14}
        & & Head & Medium & Tail & Overall & Head & Medium & Tail & Overall & Head & Medium & Tail & Overall \\
        \midrule
        \rowcolor{gray!15} CE & -- &
        69.71 & 43.91 & 16.30 & 46.51 &
        68.20 & 45.37 & 15.17 & 41.40 &
        68.66 & 63.54 & 58.82 & 59.56 \\
        Focal~\cite{lin2017focal} & ICCV’2017 &
        67.57 & 49.80 & 22.83 & 49.78 &
        66.43 & 47.49 & 21.87 & 43.13 &
        67.74 & 68.55 & 64.96 & 66.86 \\
        CB~\cite{cui2019classbalanced} & CVPR’2019 &
        69.69 & 47.69 & 19.90 & 48.05 &
        67.63 & 45.77 & 16.37 & 42.10 &
        67.69 & 66.66 & 62.39 & 64.71 \\
        LDAM-DRW~\cite{cao2019learning} & NeurIPS’2019 &
        68.06 & 47.14 & 25.90 & 50.39 &
        68.97 & 46.40 & 25.07 & 45.40 &
        67.15 & 68.30 & 63.64 & 65.57 \\
        BALMS~\cite{ren2020balms} & NeurIPS’2020 &
        69.80 & 53.14 & 30.23 & 54.61 &
        68.80 & 58.57 & 28.03 & 50.74 &
        70.48 & 70.19 & 69.53 & 70.55 \\
        ReMix~\cite{chou2020remix} & ECCV’2020 &
        69.49 & 46.94 & 25.67 & 50.05 &
        69.06 & 46.97 & 20.10 & 43.34 &
        68.14 & 66.49 & 62.55 & 64.27 \\
        BBN~\cite{zhou2020bbn} & CVPR’2020 &
        69.11 & 50.26 & 29.37 & 52.79 &
        \textbf{69.21} & 57.71 & 29.42 & 50.85 &
        68.58 & 69.29 & 65.72 & 67.44 \\
        MetaSAug~\cite{li2021metasaug} & CVPR’2021 &
        66.63 & 47.86 & 30.57 & 51.94 &
        67.31 & 47.91 & 27.57 & 45.55 &
        \underline{69.15} & 68.14 & 67.82 & 68.00 \\
        CMO~\cite{park2022majority} & CVPR’2022 &
        67.40 & 50.20 & 28.97 & 52.15 &
        68.34 & 49.37 & 28.53 & 46.01 &
        68.57 & 70.31 & 69.60 & 69.41 \\
        SAFA~\cite{hong2022safa} & ECCV’2022 &
        67.20 & 52.03 & 32.03 & 55.29 &
        67.29 & 54.94 & 29.47 & 48.02 &
        68.86 & 72.03 & 70.42 & 70.86 \\
        WB~\cite{alshammari2022long} & CVPR’2022 &
        \underline{70.46} & 49.80 & 31.90 & 53.71 &
        68.86 & 49.94 & 29.60 & 47.81 &
        70.10 & 69.74 & 67.95 & 69.35 \\
        GML~\cite{du2023no} & CVPR’2023 &
        69.03 & 53.66 & 32.17 & 55.24 &
        68.63 & 55.20 & 30.97 & 50.23 &
        \textbf{70.71} & 70.62 & 69.38 & 70.85 \\
        ConCutMix~\cite{pan2024enhanced} & TIP’2024 &
        69.23 & 48.94 & 31.00 & 54.97 &
        67.83 & 53.54 & 30.60 & 47.86 &
        68.64 & 71.48 & 70.32 & 70.24 \\
        LOS~\cite{sunrethinking} & ICLR’2025 &
        \textbf{70.79} & 55.31 & 32.73 & 56.20 &
        \textbf{69.21} & 57.71 & 29.42 & 50.85 &
        68.50 & \underline{71.43} & 72.14 & 71.01 \\
        \rowcolor{red!10} \texttt{CAR} (Ours) & -- &
        67.00 & \underline{54.57} & \underline{35.77} & \underline{57.48} &
        65.37 & \underline{58.60} & \underline{34.03} & \underline{51.85} &
        68.52 & 71.32 & \underline{73.73} & \underline{71.56} \\
        \rowcolor{red!10}\texttt{CAR} (Ours) + ConCutMix & -- &
        69.56 & \textbf{55.94} & \textbf{38.07} & \textbf{60.07} &
        \underline{68.89} & \textbf{60.14} & \textbf{37.40} & \textbf{55.68} &
        69.10 & \textbf{72.05} & \textbf{75.42} & \textbf{73.38} \\
        \bottomrule
    \end{tabular}
    }
    \label{tab:lt_main}
\end{table*}

\section{Experiments}
\label{sec:experiments}

\subsection{Experimental Setup}

\textbf{Datasets.}~We conduct extensive experiments on four widely used long-tailed benchmarks: CIFAR100-LT~\cite{krizhevsky2009learning}, ImageNet-LT~\cite{liu2019large}, Tiny-ImageNet-LT~\cite{le2015tiny}, and iNaturalist2018~\cite{van2018inaturalist}. 
Following~\cite{li2025focal}, we adopt the same data construction and evaluation protocols. 
For CIFAR-100-LT, ImageNet-LT, and Tiny-ImageNet-LT, the imbalanced training sets are generated by sampling the original balanced datasets according to an exponential distribution, with imbalance factors of 200, 100, and 50, respectively. 
The test sets remain class-balanced. 
The iNaturalist2018 dataset contains 437.5K natural images from 8,142 categories, exhibiting a naturally long-tailed distribution without synthetic resampling. 
Consistent with~\cite{zhao2024ltrl}, we report results on three subsets according to the number of training samples per class: 
Head (more than 100 images), Medium (20$\sim$100 images), and Tail (less than 20 images).
More details abour datasets would be discussed in the Appendix~\ref{appedix:datasets}.

\begin{table*}[tbh]
    \centering
    \caption{Top-1 accuracy (\%) comparison on Tiny-ImageNet-LT, CIFAR100-LT (IF=100), and iNaturalist using pre-trained ViT-Small.
    Results are reported for Head, Medium, Tail, and Overall.
    The best results are in \textbf{bold}, and the second-best are \underline{underlined}.}
    \setlength{\tabcolsep}{3.5pt}
    \renewcommand{\arraystretch}{0.95}
    \resizebox{0.99\textwidth}{!}{
    \begin{tabular}{l r V cccc V cccc V cccc}
        \toprule
        \textbf{Methods} & \textbf{Venue} &
        \multicolumn{4}{c}{\textbf{Tiny-ImageNet-LT}} &
        \multicolumn{4}{c}{\textbf{CIFAR100-LT}} &
        \multicolumn{4}{c}{\textbf{iNaturalist}} \\
        \cmidrule(lr){3-6}\cmidrule(lr){7-10}\cmidrule(lr){11-14}
        & & Head & Medium & Tail & Overall & Head & Medium & Tail & Overall & Head & Medium & Tail & Overall \\
        \midrule
        \rowcolor{gray!15} CE & -- &
        87.31 & 71.66 & 39.90 & 63.91 &
        90.46 & 73.37 & 50.50 & 70.49 &
        75.47 & 73.95 & 71.82 & 73.45 \\
        Focal~\cite{lin2017focal} & ICCV’2017 &
        88.57 & 74.11 & 44.23 & 67.31 &
        91.69 & 77.66 & 54.63 & 73.66 &
        75.15 & 78.21 & 78.93 & 79.59 \\
        CB~\cite{cui2019classbalanced} & CVPR’2019 &
        85.60 & 75.31 & 43.58 & 66.99 &
        91.83 & 76.17 & 51.57 & 72.27 &
        75.36 & 75.84 & 75.93 & 77.08 \\
        LDAM-DRW~\cite{cao2019learning} & NeurIPS’2019 &
        85.91 & 75.29 & 43.90 & 65.28 &
        91.31 & 75.66 & 52.87 & 72.40 &
        75.87 & 77.84 & 77.86 & 79.05 \\
        BALMS~\cite{ren2020balms} & NeurIPS’2020 &
        \underline{89.54} & \underline{76.69} & 46.77 & 70.11 &
        \textbf{93.80} & 78.37 & 56.93 & 77.44 &
        \textbf{77.60} & 80.24 & 82.56 & 82.34 \\
        ReMix~\cite{chou2020remix} & ECCV’2020 &
        85.49 & 75.51 & 43.73 & 65.19 &
        92.09 & 73.63 & 53.37 & 72.56 &
        74.57 & 77.59 & 76.51 & 78.22 \\
        BBN~\cite{zhou2020bbn} & CVPR’2020 &
        86.27 & 75.78 & 44.94 & 66.47 &
        92.23 & 76.00 & 53.37 & 74.89 &
        75.07 & 79.57 & 79.83 & 80.88 \\
        MetaSAug~\cite{li2021metasaug} & CVPR’2021 &
        86.26 & 74.40 & 47.40 & 67.40 &
        92.77 & 75.46 & 55.20 & 74.34 &
        75.62 & 78.86 & 79.84 & 80.06 \\
        CMO~\cite{park2022majority} & CVPR’2022 &
        85.94 & 75.97 & 45.70 & 67.38 &
        91.57 & 76.14 & 57.07 & 75.37 &
        75.45 & 79.12 & 81.38 & 81.52 \\
        SAFA~\cite{hong2022safa} & ECCV’2022 &
        86.09 & 73.20 & 49.97 & 68.99 &
        92.57 & 77.23 & 59.50 & 77.88 &
        76.39 & 80.09 & 83.07 & 82.77 \\
        WB~\cite{alshammari2022long} & CVPR’2022 &
        88.97 & 74.87 & 45.82 & 67.51 &
        \underline{93.54} & 77.63 & 53.23 & 75.78 &
        76.04 & 80.09 & 79.93 & 81.33 \\
        GML~\cite{du2023no} & CVPR’2023 &
        \textbf{89.87} & 75.93 & 48.87 & 70.43 &
        92.09 & 79.11 & 57.10 & 77.40 &
        \underline{76.95} & 80.69 & 83.10 & 82.79 \\
        ConCutMix~\cite{pan2024enhanced} & TIP’2024 &
        88.23 & 76.54 & 46.27 & 69.15 &
        92.40 & 76.83 & 58.77 & 76.46 &
        75.86 & 79.55 & 83.32 & 82.06 \\
        LOS~\cite{sunrethinking} & ICLR’2025 &
        88.36 & 76.83 & 49.86 & 71.67 &
        93.02 & 80.06 & 58.14 & 78.50 &
        76.43 & 80.58 & 83.18 & 83.02 \\
        \rowcolor{red!10} \texttt{CAR} (Ours) & -- &
        87.60 & 75.91 & \underline{52.47} & \underline{72.97} &
        92.89 & \underline{80.46} & \underline{61.17} & \underline{79.37} &
        75.38 & \underline{81.71} & \underline{84.26} & \underline{83.74} \\
        \rowcolor{red!10} \texttt{CAR} (Ours) + ConCutMix & -- &
        88.71 & \textbf{76.84} & \textbf{54.23} & \textbf{75.84} &
        93.26 & \textbf{81.24} & \textbf{63.33} & \textbf{82.12} &
        76.32 & \textbf{82.08} & \textbf{85.40} & \textbf{85.44} \\
        \bottomrule
    \end{tabular}
    }
    \label{tab:lt_pretrained}
\end{table*}

\noindent \textbf{Implementation Details.}~We adopt ViT-Small~\cite{dosovitskiy2021an} as the main backbone, while additional results on different model sizes (Tiny, Base, and Large) and architectures (ResNet~\cite{he2016deep} and Swin Transformer~\cite{liu2021swin}) are also included to demonstrate model generalization.

For training from scratch, we follow the standard protocol used in long-tailed recognition~\cite{hong2021disentangling}:
models on CIFAR100-LT and  ImageNet-LT are trained for 200 epochs, while iNaturalist2018 uses 300 epochs due to its greater intra-class variability and inherent natural imbalance.
For fine-tuning from pre-trained models, we adopt a shorter schedule of 100 epochs, which is consistent with common practice for adapting large-scale pretrained backbones to long-tailed settings.
All experiments use the AdamW optimizer.
The batch size is fixed to 128 across all datasets.
% Unless otherwise specified, we set $\beta{=}0.3$, $r_{0}{=}0.2$, $\alpha{=}0.5$, and $\gamma{=}0.0$.
Additional training settings are provided in the Appendix~\ref{appedix:implementation_details}.

\noindent \textbf{Baselines.}~We compare our method with a comprehensive set of baselines, covering standard classification, long-tailed data augmentation, and representative long-tailed learning approaches.
($1$)~\textit{Standard Methods.}
To provide a fair comparison, we first include widely used classification baselines: 
Cross-Entropy (CE).
($2$)~\textit{LT Data Augmentation Methods.}
We further compare with a series of augmentation-based long-tailed strategies, including 
ReMix~\cite{chou2020remix}, MetaSAug~\cite{li2021metasaug}, 
CMO~\cite{park2022majority}, ConCutMix~\cite{pan2024enhanced}, and SAFA~\cite{hong2022safa}.
($3$)~\textit{Other Long-Tailed Recognition Methods.}
We evaluate against representative long-tailed learning methods, including 
Class-Balanced Loss (CB)~\cite{cui2019classbalanced}, 
Focal Loss~\cite{lin2017focal}, 
BALMS~\cite{ren2020balms}, 
GML~\cite{du2023no}, 
Weight Balancing (WB)~\cite{alshammari2022long}, 
BBN~\cite{zhou2020bbn}, 
LDAM-DRW~\cite{cao2019learning}, and LOS~\cite{sunrethinking}.

\subsection{Superior Performance on Long-tailed Datasets} \label{full:main}

\textbf{Training from Scratch.}~We evaluate the proposed \texttt{CAR} under a strict training-from-scratch setting. Experiments are conducted on three representative long-tailed benchmarks: CIFAR100-LT~\cite{krizhevsky2009learning}, ImageNet-LT~\cite{liu2019large}, and iNaturalist2018~\cite{van2018inaturalist}, using ViT-Small as the backbone. The results are summarized in Table~\ref{tab:lt_main}. \underline{\emph{We observe}} that \texttt{CAR} consistently achieves the best overall performance across all three long-tailed benchmarks, outperforming all competing methods. \underline{\emph{In addition}}, \texttt{CAR} improves the best tail accuracy by $1.59\% \sim 4.61\%$, demonstrating its strong effectiveness in enhancing tail-class performence. \underline{\emph{Furthermore}}, when combined with the ConCutMix augmentation, \texttt{CAR} surpasses the current state-of-the-art LOS~\cite{sunrethinking} by $2.37\% \sim 4.83\%$ and $3.28\% \sim 7.98\% $ in overall and tail accuracy, respectively, across the three datasets. For example, on ImageNet-LT, our method improves the previous best overall accuracy from $56.20\%$ to $60.07\%$, and boosts the best tail accuracy from $32.73\%$ to $38.07\%$.

\noindent \textbf{Fine-tuning from Pre-trained Models.}~ Our success extends beyond the training-from-scratch setting. We further evaluate \texttt{CAR} and all baselines under the fine-tuning setting, using a pre-trained ViT-Small as the backbone. Results are reported in Table~\ref{tab:lt_pretrained} \emph{Consistent observation can be made across all the three datasets}: 
\ding{182} \texttt{CAR} delivers substantial improvements in both tail accuracy and overall accuracy, highlighting its effectiveness in adapting pre-trained models to long-tailed distributions; 
\ding{183} when combined with ConCutMix, \texttt{CAR} achieves the strongest performance, boosting tail accuracy by $2.22\% \sim 5.19\%$ and overall accuracy by $2.42\% \sim 4.17\%$, significantly outperforming existing state-of-the-art methods.  For example, on iNaturalist, the best overall accuracy improves from $83.02\%$ to $85.44\%$ and the best tail accuracy increases from $83.18\%$ to $85.40\%$.

%As summarized in Table~\ref{tab:lt_pretrained}, our method consistently achieves the best overall accuracies on all three benchmarks and exhibits leading performance in most Medium and Tail partitions. Compared with previous state-of-the-art methods, \texttt{CAR} significantly enhances the adaptability of pre-trained representations to imbalanced data distributions, while maintaining strong generalization on head classes. On ImageNet-LT, our approach achieves the top overall accuracy and clear improvements in tail recognition, indicating that our confusion spectral regularization can effectively refine the feature geometry learned from large-scale pre-training. On CIFAR100-LT and iNaturalist2018, \texttt{CAR} delivers substantial gains across all data regimes, showing remarkable consistency under both synthetic and naturally occurring long-tailed distributions. Furthermore, incorporating ConCutMix~\cite{pan2024enhanced} yields additional gains across all benchmarks, pushing the performance boundaries on ImageNet-LT, CIFAR100-LT, and iNaturalist2018. These results confirm that \texttt{CAR} not only preserves the advantages of pre-training but also serves as a versatile and complementary regularization framework that further amplifies the effectiveness of modern augmentation-based strategies for long-tailed fine-tuning.

\noindent \textbf{Worst-Class Performance.}~To evaluate the effectiveness of \texttt{CAR} in enhancing the generalization of the most underrepresented category, we report the worst-class accuracy on both the training and test sets, along with the Worst-class Ratio (WR = Test/Training), which quantifies the generalization ability of the worst-performing class. Experiments are conducted on ImageNet-LT and CIFAR100-LT using ViT-Small as the backbone.  The results in Table~\ref{tab:wgr_lt} demonstrate that the proposed \texttt{CAR} substantially enhances worst-class generalization. \ding{182} Existing methods exhibit extremely poor worst-class accuracy, typically below $10\%$. \ding{183} In contrast, \texttt{CAR} improves worst-class accuracy by $8\%$ on ImageNet-LT and $6\%$ on CIFAR100-LT, while \texttt{CAR} + ConCutMix further boosts the gains to over $10\%$ on both datasets. 
\ding{184} Moreover, \texttt{CAR} yields a notable increase in the Worst-class Ratio (WR), indicating stronger generalization from training to test time.

\begin{table}[tbh]
\centering
\footnotesize
\setlength{\tabcolsep}{2pt}
\renewcommand{\arraystretch}{0.95}
\caption{Worst-class accuracy on training/test sets and the Worst-class Ratio (WR = Test/Training) based on ViT-Small. All results are presented as percentages.
The best results are highlighted in \textbf{bold}, and the second-best are \underline{underlined}.}
\label{tab:wgr_lt}
\resizebox{\columnwidth}{!}{%
\begin{tabular}{l V ccc V ccc}
\toprule
\multirow{2}{*}{\textbf{Methods}} &
\multicolumn{3}{c}{\textbf{ImageNet-LT}} &
\multicolumn{3}{c}{\textbf{CIFAR100-LT}} \\
\cmidrule(lr){2-4}\cmidrule(lr){5-7}
& Training$(\%)$ & Test$(\%)$ & WR 
& Training$(\%)$ & Test$(\%)$ & WR \\
\midrule
Focal~\cite{lin2017focal}        & 87.22 &  0  & 0.00 & 85.45 &  2  & 0.02 \\
CB~\cite{cui2019classbalanced}   & 84.59 &  0  & 0.00 & 80.00 &  0  & 0.00 \\
BALMS~\cite{ren2020balms}        & 91.81 &  6  & 0.07 & 90.91 &  5  & 0.05 \\
CMO~\cite{park2022majority}      & 90.45 &  4  & 0.04 & 86.00 &  6  & 0.07 \\
SAFA~\cite{hong2022safa}         & 92.35 & 10  & 0.11 & 91.09 &  8  & 0.09 \\
GML~\cite{du2023no}              & 92.83 &  8  & 0.09 & 90.91 &  8  & 0.09 \\
ConCutMix~\cite{pan2024enhanced} & 91.26 &  8  & 0.09 & 87.82 &  8  & 0.09 \\
LOS~\cite{sunrethinking}         &93.72	 & 10  & 0.11 & 91.23 &  8  & 0.09\\

\rowcolor{red!10}
\texttt{CAR} (Ours)                    & \underline{94.24} & \underline{18} & \underline{0.19}
                                 & \underline{92.73} & \underline{14} & \underline{0.15} \\
 \rowcolor{red!10}
\texttt{CAR} (Ours) + ConCutMix       &   \textbf{94.85} &   \textbf{22}  &  \textbf{0.23}  & \textbf{93.17} & \textbf{18}& \textbf{0.19}\\                                
\bottomrule
\end{tabular}
}
\end{table}

\subsection{Generalization Across Backbones} \label{full:backbone}
To verify the effectiveness of our method across different architectures, we further evaluate \texttt{CAR} on five representative backbones, including ViT-Tiny, ViT-Base, ViT-Large, ResNet~\cite{he2016deep} and Swin Transformer~\cite{liu2021swin}, as summarized in Table~\ref{tab:backbone}. 
\texttt{CAR} consistently achieves the highest top-1 accuracy across all evaluated backbones, demonstrating that its effectiveness is not tied to any specific architectural design. 
Notably, on larger transformer backbones, \texttt{CAR} yields consistent gains, including +1.0\% on ViT-Large and +0.8\% on Swin, suggesting that the method integrates effectively with self-attention architectures.
The improvements observed on ResNet further confirm that the proposed confusion-aware spectral regularization also benefits traditional CNN architectures. Overall, the consistent performance gains across diverse model families verify that \texttt{CAR} is a backbone-agnostic regularization method that can be seamlessly incorporated into a wide range of architectures to deliver stable and transferable improvements. Additional results are reported in Appendix~\ref{appedix:models}.

%\texttt{CAR} consistently achieves the highest top-1 accuracy across all backbones, demonstrating that its effectiveness is not limited to a specific model design. Notably, the significant improvements on transformer-based architectures (e.g., +1.6\% on ViT-Large and +1.7\% on Swin) indicate that our method harmonizes well with both self-attention and convolutional feature extractors. The gains on ResNet confirm that the proposed confusion spectral regularization work well in traditional CNNs. Such consistent superiority across model families verifies that \texttt{CAR} serves as a backbone-agnostic and plug-and-play regularization framework, which can be seamlessly integrated into diverse architectures to yield stable and transferable accuracy gains. More detailed backbone-wise results are provided in the Appendix~\ref{appedix:models} and Appendix~\ref{appedix:models_size}.

\subsection{Generalization Across Imbalance Factors} \label{full:if}
We further examine the adaptability of \texttt{CAR} under different imbalance factors (IF = 50 and 200) on ImageNet-LT and CIFAR100-LT using ViT-Small as the backbone. As shown in Table~\ref{tab:lt_ratio}, our method achieves the highest overall accuracy across all settings, demonstrating its strong ability to handle varying levels of class skewness. While the performance of existing methods degrades noticeably as the imbalance increases, \texttt{CAR} maintains a more stable trend, outperforming the second-best approach by clear margins under both moderate (IF=50) and extreme (IF=200) conditions. This consistent behavior indicates that the proposed method effectively mitigates head-class dominance and improves representation alignment for tail categories. Furthermore, the smaller performance gap between IF=50 and IF=200 shows that \texttt{CAR} preserves balanced learning dynamics even when minority classes are extremely rare. More extensive results and analyses across additional imbalance settings can be found in the Appendix~\ref{appedix:if}.

\begin{table}[tb]
\centering
\begingroup
\footnotesize
\setlength{\tabcolsep}{3.6pt}
\renewcommand{\arraystretch}{0.93}
\caption{Top-1 accuracy (\%) on ImageNet-LT across different backbones. 
Results include ViT variants (Tiny/Base/Large), ResNet, and Swin.
The best results are highlighted in \textbf{bold}.}
\label{tab:backbone}
\begin{tabular}{lccccc}
\toprule
\textbf{Methods} & \textbf{ViT-Tiny} & \textbf{ViT-Base} & \textbf{ViT-Large} & \textbf{ResNet} & \textbf{Swin} \\
\midrule
Focal~\cite{lin2017focal}           & 37.98 & 54.44 & 60.66 & 42.31 & 50.92 \\
CB~\cite{cui2019classbalanced}      & 35.09 & 52.65 & 59.14 & 40.06 & 48.79 \\
BALMS~\cite{ren2020balms}           & 45.71 & 62.55 & 67.92 & 48.73 & 55.17 \\
ReMix~\cite{chou2020remix}          & 37.13 & 54.10 & 62.10 & 41.83 & 49.08 \\
MetaSAug~\cite{li2021metasaug}      & 40.80 & 56.70 & 64.22 & 44.47 & 52.52 \\
CMO~\cite{park2022majority}      & 40.07 & 57.36 & 64.71 & 45.67 & 52.60 \\
SAFA~\cite{hong2022safa}            & 43.00 & 60.34 & 67.46 & 46.33 & 54.43 \\
GML~\cite{du2023no}                 & 45.24 & 61.86 & 68.63 & 48.77 & 55.19 \\
ConCutMix~\cite{pan2024enhanced}    & 43.26 & 58.48 & 66.17 & 45.73 & 54.48 \\
LOS~\cite{sunrethinking}            &45.66  & 62.54 & 68.26 & 49.54	& 55.62\\
\rowcolor{red!10}
\texttt{CAR} (Ours)                               & \textbf{46.35} &\textbf{ 63.79} & \textbf{69.26} & \textbf{50.27} & \textbf{56.38} \\
\bottomrule
\end{tabular}
\endgroup
\end{table}

\subsection{Visualization}
To further illustrate the effect of our method on inter-class discrimination, we visualize the class-wise confusion matrices of different approaches on CIFAR100-LT using ViT-Small, as shown in Figure~\ref{fig:confusion_matrix_comparison}. Specifically, we randomly select \textit{ten} categories spanning head, medium, and tail regions to provide a balanced view of class-wise interactions under long-tailed distributions. The left group corresponds to models trained from scratch, while the right group represents fine-tuning from a pre-trained model. Compared with WB and BALMS, Our \texttt{CAR} yields substantially fewer high-intensity off-diagonal responses on both training-from-scratch and fine-tuning-from-pre-trained settings, suggesting that it effectively suppresses inter-class confusion and enhances the separability between categories. 
%This trend holds consistently across all regimes, demonstrating that \texttt{CAR} effectively mitigates inter-class interference and enforces more discriminative decision boundaries. Moreover, in the fine-tuning scenario, the confusion patterns of \texttt{CAR} become even more compact and homogeneous, suggesting that the spectral regularization and pre-trained representations work synergistically to enhance class alignment. 

\begin{table}[tbh]
\centering
\begingroup
\small
\setlength{\tabcolsep}{3.6pt}
\caption{Top-1 accuracy (\%) across different imbalance factors (IF) on ImageNet-LT and CIFAR100-LT based on the ViT-Small.
The best results are highlighted in \textbf{bold}.}
\label{tab:lt_ratio}
\begin{tabular}{lcccc}
\toprule
\multirow{2}{*}{\textbf{Methods}} & \multicolumn{2}{c}{\textbf{ImageNet-LT}} &
\multicolumn{2}{c}{\textbf{CIFAR100-LT}} \\
\cmidrule(lr){2-3}\cmidrule(lr){4-5}
& IF=50 & IF=200 & IF=50 & IF=200 \\
\midrule
CE                 & 49.33 & 35.99 & 44.54 & 37.19 \\
% RS                 & 53.88 & 39.81 & 48.50 & 41.85 \\
% RW                  & 55.38 & 38.70 & 48.44 & 41.40 \\
Focal~\cite{lin2017focal}                & 52.14 & 37.27 & 47.60 & 40.08 \\
CB~\cite{cui2019classbalanced}           & 50.65 & 36.13 & 45.38 & 39.03 \\
LDAM-DRW~\cite{cao2019learning}          & 54.96 & 40.78 & 49.57 & 40.96 \\
BALMS~\cite{ren2020balms}                & 59.57 & 42.18 & 53.34 & 43.85 \\
ReMix~\cite{chou2020remix}               & 53.32 & 37.62 & 48.34 & 39.79 \\
BBN~\cite{zhou2020bbn}                 & 55.26 & 41.89 & 51.79 & 41.43 \\
MetaSAug~\cite{li2021metasaug}           & 55.97 & 38.23 & 50.94 & 42.37 \\
CMO~\cite{park2022majority}          & 57.27 & 42.66 & 51.17 & 43.44 \\
SAFA~\cite{hong2022safa}                 & 58.77 & 43.30 & 53.16 & 43.72 \\
WB~\cite{alshammari2022long}             & 57.93 & 41.76 & 51.67 & 42.68 \\
GML~\cite{du2023no}                      & 59.52 & 43.65 & 54.40 & 45.23 \\
ConCutMix~\cite{pan2024enhanced}         & 58.14 & 43.20 & 50.40 & 44.80 \\
LOS~\cite{sunrethinking}                 & 60.11 & 44.14 & 54.82 & 45.62\\

\rowcolor{red!10}
\texttt{CAR} (Ours)                                     & \textbf{61.10} & \textbf{46.88} & \textbf{55.93} & \textbf{46.30} \\
\bottomrule
\end{tabular}
\endgroup
\end{table}

\begin{figure*}[!thb]
    \centering
    \includegraphics[width=0.80\linewidth]{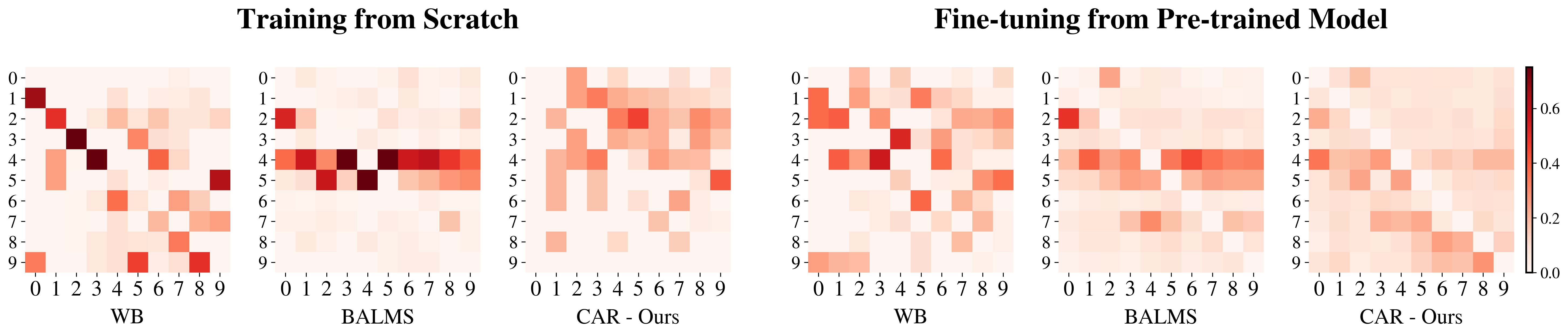}
    \caption{Class-wise confusion matrices on CIFAR100-LT using ViT-Small. Left: training from scratch. Right: fine-tuning from pre-trained model.}
    \label{fig:confusion_matrix_comparison}
\end{figure*}

\subsection{Complementary to Data Augmentation} \label{full:dataaug}
We further examine whether \texttt{CAR} can complement existing long-tailed augmentation strategies as a general regularization module. To this end, we combine \texttt{CAR} with five representative approaches under different imbalance factors on ImageNet-LT and CIFAR100-LT. All models are trained under identical configurations to ensure a fair comparison across varying augmentation paradigms.
As presented in Table~\ref{tab:lt_aug}, \texttt{CAR} consistently boosts the performance of all augmentation methods across datasets and imbalance levels. 
The gains are most pronounced when integrated with ConCutMix and SAFA, achieving the highest overall accuracies under every setting. 
These consistent improvements verify that the proposed spectral regularization yields complementary benefits, enhancing model performance without overlapping effects with existing augmentation techniques.
These results suggest that \texttt{CAR} promotes better class-wise feature separation and facilitates more effective exploitation of augmented samples. 
The observed synergy between model-level regularization and data-level transformations confirms that \texttt{CAR} provides a generalizable mechanism that can be seamlessly integrated with various augmentation frameworks to further enhance long-tailed recognition.
More detailed results and visual comparisons are presented in the Appendix~\ref{appedix:dataaug}.

\begin{table}[!thb]
\centering
\footnotesize
\setlength{\tabcolsep}{2.8pt}
\renewcommand{\arraystretch}{0.96}
\caption{Top-1 accuracy (\%) of ViT-Small on ImageNet-LT and CIFAR100-LT under different imbalance factors (IF).
``+'' indicates the combination of our method with other long-tailed data augmentation methods.}
\label{tab:lt_aug}
\begin{tabular}{l V ccc V ccc}
\toprule
\multirow{2}{*}{\textbf{Methods}} &
\multicolumn{3}{c}{\textbf{ImageNet-LT}} &
\multicolumn{3}{c}{\textbf{CIFAR100-LT}} \\
\cmidrule(lr){2-4}\cmidrule(lr){5-7}
& IF=50 & IF=100 & IF=200 & IF=50 & IF=100 & IF=200 \\
\midrule
\texttt{CAR} (Ours)                       & 61.10 & 57.48 & 46.88 & 55.93 & 51.85 & 46.30 \\
\quad + ReMix~\cite{chou2020remix}       & 62.05 & 58.74 & 48.13 & 56.77 & 53.64 & 48.57 \\
\quad + MetaSAug~\cite{li2021metasaug}   & 64.25 & 59.28 & 49.98 & 57.64 & 53.75 & 47.70 \\
\quad + CMO~\cite{park2022majority}      & 64.80 & 59.72 & 50.74 & 58.20 & 54.77 & 49.45 \\
\quad + SAFA~\cite{hong2022safa}         & 65.98 & 60.43 & 51.20 & 59.80 & 55.71 & 49.89 \\
\quad + ConCutMix~\cite{pan2024enhanced} & 66.73 & 60.07 & 50.03 & 58.77 & 55.68 & 50.19 \\
\bottomrule
\end{tabular}
\end{table}

\section{Ablation Studies}
\textbf{Ablation for Class-wise Weight.}~We examine the influence of the class-wise weight $\Lambda$ on spectral regularization, as shown in Table~\ref{tab:lambda_ema_combined}. All models are trained under identical settings using ViT-Small on ImageNet-LT and CIFAR100-LT.
Incorporating $\Lambda$ consistently improves overall accuracy on both datasets, indicating that frequency-aware weighting mitigates the dominance of head categories and stabilizes optimization under imbalance. The absence of $\Lambda$ leads to skewed gradients toward frequent classes, which degrades tail performance and hinders convergence.
These observations confirm that $\Lambda$ effectively rebalances gradient contributions across categories, encouraging uniform learning dynamics and improving recognition of minority classes in long-tailed visual recognition.

% \begin{table}[t!]
% % 统一字号与间距（可按需调小/调大）
% \scriptsize
% \setlength{\tabcolsep}{2pt}
% \renewcommand{\arraystretch}{1.0}

% % 把两块内容装进一个恰好等于列宽的盒子里，防止越界或漂移
% \makebox[\linewidth][c]{%
% %
% \begin{minipage}[t]{0.48\linewidth}\vspace{0pt}\centering
% \captionof{table}{Effect of $\lambda_j$.}
% \label{tab:w_ablation}
% \begin{tabular}{lcc}
% \toprule
% \textbf{Datasets} & \textbf{w/ $\lambda_j$} & \textbf{w/o $\lambda_j$} \\
% \midrule
% ImageNet-LT  & 57.48 & 54.39 \\
% CIFAR100-LT  & 51.85 & 49.62 \\
% \bottomrule
% \end{tabular}
% \end{minipage}
% \hfill
% \begin{minipage}[t]{0.48\linewidth}\vspace{0pt}\centering
% \captionof{table}{Effect of EMA.}
% \label{tab:ema_ablation}
% \begin{tabular}{lcc}
% \toprule
% \textbf{Datasets} & \textbf{w/ EMA} & \textbf{w/o EMA} \\
% \midrule
% ImageNet-LT  & 57.48 & 55.77 \\
% CIFAR100-LT  & 51.85 & 50.20 \\
% \bottomrule
% \end{tabular}
% \end{minipage}
% } % end makebox

% \end{table}

\begin{figure*}[!htb]
  \centering
  \includegraphics[width=0.80\linewidth]{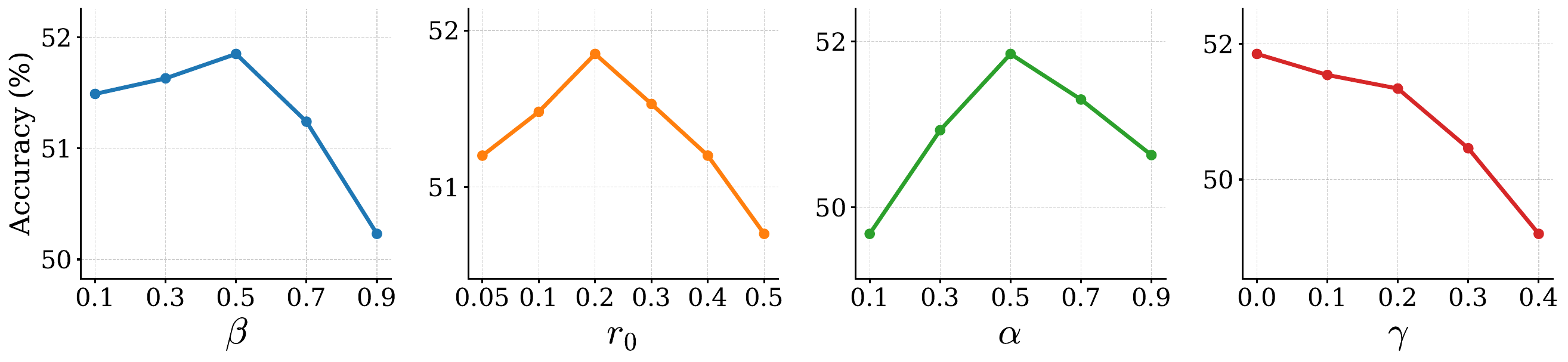}
    \caption{Ablation on four hyperparameters on CIFAR100-LT with ViT-Small (Top-1 accuracy).
    From left to right: EMA factor $\beta$, smoothing radius $r_{0}$, regularization weight $\alpha$, and margin gate $\gamma$.}
  \label{fig:hparam_curves}
\end{figure*}

\noindent\textbf{Ablation for EMA.}~We evaluate the effect of the EMA mechanism on stabilizing spectral estimation, as reported in Table~\ref{tab:lambda_ema_combined}. 
Introducing EMA leads to consistent improvements on both datasets, confirming that exponential averaging effectively smooths confusion updates and enhances training stability under imbalance. 
Without EMA, the estimation becomes highly variant across iterations, resulting in unstable convergence and reduced overall performance. 
These results demonstrate that EMA plays a vital role in maintaining robust spectral regularization. By maintaining smoother updates of the confusion matrix, it enables more robust convergence and contributes to overall performance gains in long-tailed recognition.

\noindent \textbf{Hyperparameter Analysis.}~We analyze the sensitivity of \texttt{CAR} to four key hyperparameters on CIFAR100-LT using ViT-Small, as shown in Figure~\ref{fig:hparam_curves}. From left to right, the plots correspond to the EMA factor $\beta$, smoothing radius $r_{0}$, regularization weight $\alpha$, and margin gate $\gamma$. Overall, \texttt{CAR} exhibits stable behavior across a wide range of configurations, suggesting that the model is not overly sensitive to precise parameter choices. A moderate EMA factor ($\beta=0.5$) provides the most stable moving average for spectral estimation, while $r_{0}=0.2$ achieves a good balance between suppressing noise and preserving discrimination. The regularization weight $\alpha$ reaches an optimum near $0.5$, and smaller $\gamma$ values yield better results by preventing boundary distortion under long-tailed conditions. In summary, the analysis highlights how different hyperparameters cooperatively shape the trade-off between stability and discriminative ability in \texttt{CAR}.

\begin{table}[htb]
\centering
\small
\setlength{\tabcolsep}{2.8pt}
\renewcommand{\arraystretch}{0.96}
\caption{Ablations for $\Lambda$ and EMA. Experiments are conducted on ImageNet-LT and CIFAR100-LT based on the ViT-Small.}
\label{tab:lambda_ema_combined}
\resizebox{0.75\columnwidth}{!}{%
\begin{tabular}{l l cc}
\toprule
\textbf{Factor} & \textbf{Setting} & \textbf{ImageNet-LT} & \textbf{CIFAR100-LT} \\
\midrule
\multirow{2}{*}{$\Lambda$ } 
  & w/o $\Lambda$   & 54.39 & 49.62 \\
   & w/ $\Lambda$    & 57.48 & 51.85 \\
\midrule
\multirow{2}{*}{EMA} 
  & w/o EMA         & 55.77 & 50.20 \\
    & w/ EMA          & 57.48 & 51.85 \\
\bottomrule
\end{tabular}
}
\end{table}

\section{Related Work}
\label{sec:relatedwork}
\noindent\textbf{Long-Tailed Learning.}~Long-tailed learning aims to address the severe class imbalance that commonly occurs in real-world datasets, where a few head classes dominate the training distribution while numerous tail classes have scarce samples~\cite{wang2019dynamic}. Such imbalance leads to biased decision boundaries and degraded generalization on rare categories. 
To mitigate this issue, a wide range of strategies have been developed~\cite{hong2021disentangling}. 
Re-sampling and re-weighting approaches (e.g., CB~\cite{cui2019classbalanced}, RS~\cite{shen2016relay}, RW~\cite{khan2017cost}, LDAM-DRW~\cite{cao2019learning}) rebalance the data distribution or modify loss weights to emphasize tail instances. 
Decoupled and representation-oriented learning methods (e.g., MiSLAS~\cite{zhong2021improving}, DisAlign~\cite{zhang2021distribution}, ResLT~\cite{cui2022reslt}, BALMS~\cite{ren2020balms}) refine feature spaces through balanced fine-tuning or class-specific calibration. 
Augmentation and meta-learning schemes such as Remix~\cite{chou2020remix}, MetaSAug~\cite{li2021metasaug}, and ConCutMix~\cite{pan2024enhanced} further enhance minority representation by generating more diverse visual patterns. 
Despite these advances, existing approaches mainly focus on adjusting sample frequency or loss weighting, while the confusion spectrum, a critical indicator of model bias, remains largely underexplored and motivates our confusional spectral regularization framework.

\noindent\textbf{Confusion Matrix-based Learning.}~The confusion matrix has long been a core diagnostic tool for analyzing model predictions, capturing both class-wise accuracy and inter-class misclassification patterns~\cite{morvant2012pac,machart2012confusion,kerrigan2021combining}.
Beyond evaluation, it has been employed to characterize classifier bias~\cite{li2023wat,wu2022confusion}, improve fairness~\cite{sun2024csr,wei2023cfa,jin2025enhancing}, and support robust optimization~\cite{yue2023revisiting}.
Calibration-based methods adjust posterior probabilities to reduce bias~\cite{lovell2022never}, while relation-based approaches build class-correlation graphs to regularize representations and encourage balanced decision boundaries~\cite{gortler2022neo,arias2023confusion}.
More recently, spectral analysis of the confusion matrix has revealed that its singular value spectrum reflects the dominance of major confusion directions~\cite{erbani2024confusion}.
Building on these insights~\cite{hasnain2020evaluating}, confusional spectral regularization constrains the spectral norm of the confusion matrix to suppress biased eigen-directions and enhance classifier fairness.
% Unlike prior studies that merely analyze or post-process confusion information, our method explicitly regularizes its spectrum throughout the training phase, achieving intrinsic bias mitigation and stable long-tailed generalization across diverse datasets.
%Unlike prior studies that utilize confusion statistics for analysis or auxiliary regularization, our method directly constrains the confusion matrix during training through spectral-norm minimization.

\section{Conclusion} \label{sec:conclusion}
In this work, we introduced a confusion-centric perspective for long-tailed recognition and established a new generalization upper bound that tightly connects the class-specific error to the spectral norm of a \emph{frequency-weighted confusion matrix}. This analysis reveals that controlling the spectral structure of inter-class confusions provides a principled route for improving generalization on the underrepresented categories. Guided by these insights, we proposed \texttt{CAR}, a \emph{Confusion-Aware Spectral Regularizer} that integrates a differentiable confusion matrix surrogate with an EMA-based estimator to enable stable and scalable optimization within standard training pipelines. Extensive experiments across CIFAR100-LT, ImageNet-LT, and iNaturalist demonstrate that \texttt{CAR} consistently improves both \emph{worst-class} and \emph{overall} accuracy, achieving state-of-the-art performance under both training-from-scratch and fine-tuning regimes. Moreover, \texttt{CAR} complements existing data-augmentation strategies such as ConCutMix, yielding further gains on both head and tail classes.

%We presented Confusion-Aware Spectral Regularizer (\texttt{CAR}), a principled framework for long-tailed recognition from a confusion-spectrum perspective. By linking worst-class error to the spectral norm of the frequency-weighted confusion matrix, \texttt{CAR} reduces class interference through spectral regularization and EMA-based estimation. Experiments across datasets, backbones, and imbalance factors show consistent gains in overall and tail performance. \texttt{CAR} functions as a general regularizer that complements existing augmentation methods, offering a simple and effective solution for fair and stable long-tailed visual recognition.

\section{Acknowledgment} \label{sec:acknowledgment}
The authors acknowledge the use of resources provided by the UKRI SLAIDER project, the MRC SLAIDER-QA project, the Isambard-AI National AI Research Resource (AIRR), and the Dutch national e-infrastructure, supported by the SURF Cooperative (Project EINF-17091). Isambard-AI is operated by the University of Bristol and funded by the UK Government’s Department for Science, Innovation and Technology (DSIT) via UK Research and Innovation and the Science and Technology Facilities Council [ST/AIRR/I-A-I/1023]. Finally, we thank the anonymous reviewers for their insightful comments, which significantly improved the quality of this paper.

{
    \small
    \bibliographystyle{ieeenat_fullname}
    \bibliography{main}
}
\clearpage
\setcounter{page}{1}
\maketitlesupplementary

% \section{Rationale}
% \label{sec:rationale}
% % 
% Having the supplementary compiled together with the main paper means that:
% % 
% \begin{itemize}
% \item The supplementary can back-reference sections of the main paper, for example, we can refer to \cref{sec:intro};
% \item The main paper can forward reference sub-sections within the supplementary explicitly (e.g. referring to a particular experiment); 
% \item When submitted to arXiv, the supplementary will already included at the end of the paper.
% \end{itemize}
% % 
% To split the supplementary pages from the main paper, you can use \href{https://support.apple.com/en-ca/guide/preview/prvw11793/mac#:~:text=Delete%20a%20page%20from%20a,or%20choose%20Edit%20%3E%20Delete).}{Preview (on macOS)}, \href{https://www.adobe.com/acrobat/how-to/delete-pages-from-pdf.html#:~:text=Choose%20%E2%80%9CTools%E2%80%9D%20%3E%20%E2%80%9COrganize,or%20pages%20from%20the%20file.}{Adobe Acrobat} (on all OSs), as well as \href{https://superuser.com/questions/517986/is-it-possible-to-delete-some-pages-of-a-pdf-document}{command line tools}.

\section{Proof for Proposition 3.2}
\label{appendix_proof}

We develop the proof based on the proof in \citet{jin2025enhancing,jin2020does,jin2025s} and the following PAC-Bayesian bound.
\begin{thm}[\citet{morvant2012pac}]
\label{thm:2.1}
Consider a training dataset $\mathcal{S}$ with $m$ samples drawn from a distribution $\mathcal{D}$ on $\mathcal{X} \times \mathcal{Y}$ with $\mathcal{Y} = \{1, \ldots, K\}$. 
Given a learning algorithm (e.g., a classifier) with prior and posterior distributions $P$ and $Q$ (i.e., $w+u$) on the weights respectively, 
for any $\delta > 0$, with probability $1-\delta$ over the draw of training data, we have that 
\begin{equation}
\| \mathbf{C}^Q_\mathcal{S} - \mathbf{C}^Q_\mathcal{D} \|_2 \leq \sqrt{\frac{8 K}{m_{min} - 8K} \left[ \mathrm{KL}(Q \| P) + \ln \left( \frac{m_{min}}{4\delta} \right) \right]},
\end{equation}
where $m_{min}$ represents the minimal number of examples from $\mathcal{S}$ which belong to the same class, $\mathbf{C}^Q_\mathcal{S} = \mathbb{E}_u\mathbf{C}^{f_{w+u}}_\mathcal{S}$, and $\mathbf{C}^Q_\mathcal{D} = \mathbb{E}_u\mathbf{C}^{f_{w+u}}_\mathcal{D}$.
\end{thm}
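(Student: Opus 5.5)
The plan is to combine a PAC-Bayesian change of measure with a concentration inequality for the columns of the confusion matrix. I would follow the structure of the argument of \citet{morvant2012pac}. Throughout, I condition on the label vector $(y_1,\dots,y_m)$. The class counts $m_j$ are then fixed. For a fixed perturbed weight $w+u$, each column $j$ of $\mathbf{C}^{f_{w+u}}_\mathcal{S}$ is an average of $m_j$ i.i.d.\ vectors $\mathbf{e}_{\hat y(x_q)}$ (with the $j$-th coordinate zeroed), drawn from the class-conditional distribution given $y=j$. Its mean is the $j$-th column of $\mathbf{C}^{f_{w+u}}_\mathcal{D}$. Write $\Delta^u_j$ for the difference between these two columns. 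The sample vectors lie in the unit ball, so $\|\Delta^u_j\|_2\le 2$.

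\textbf{Step 1 (reduction to columns and to a fixed hypothesis).} By convexity of the norm and Jensen's inequality,
\[
\|\mathbf{C}^Q_\mathcal{S}-\mathbf{C}^Q_\mathcal{D}\|_2^2\le \mathbb{E}_u\|\mathbf{C}^{f_{w+u}}_\mathcal{S}-\mathbf{C}^{f_{w+u}}_\mathcal{D}\|_2^2\le \mathbb{E}_u\sum_{j=1}^K\|\Delta^u_j\|_2^2,
\]
where the last inequality uses that the spectral norm is bounded by the Frobenius norm. The right-hand side is at most $K\,\mathbb{E}_u\max_j\|\Delta^u_j\|_2^2$. Alternatively, one can keep the sum and absorb the factor $K$ into the exponent in Step 2. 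This is where the factor $K$ in the bound originates.

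\textbf{Step 2 (change of measure).} I apply the Donsker--Varadhan inequality $\mathbb{E}_Q\phi\le \mathrm{KL}(Q\|P)+\ln\mathbb{E}_P e^{\phi}$ with
\[
\phi(u,\mathcal{S})=\frac{m_{min}-8K}{8K}\sum_j\|\Delta^u_j\|_2^2 .
\]
Since $P$ does not depend on $\mathcal{S}$, Markov's inequality gives, with probability at least $1-\delta$,
\[
\ln\mathbb{E}_Pe^\phi\le \ln\tfrac1\delta+\ln\mathbb{E}_P\mathbb{E}_\mathcal{S}e^\phi .
\]
After exchanging the order of expectation, it then suffices to bound $\mathbb{E}_\mathcal{S}e^{\phi(u,\mathcal{S})}$ for each fixed $u$.

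\textbf{Step 3 (moment generating function of the squared deviation).} Conditionally on the labels, the columns are built from disjoint sets of samples and are therefore independent. Hence
\[
\mathbb{E}_\mathcal{S}e^{\phi}=\prod_j\mathbb{E}\exp\!\Big(t\|\Delta^u_j\|_2^2\Big),\qquad t=\frac{m_{min}-8K}{8K}.
\]
For each $j$, $\sqrt{m_j}\,\Delta^u_j$ is a normalized sum of bounded i.i.d.\ centered vectors. A vector Hoeffding/McDiarmid bound for $\|\Delta^u_j\|_2$ around its mean, whose mean is $\le 1/\sqrt{m_j}$, gives a sub-Gaussian tail of the form $\mathbb{P}(\|\Delta^u_j\|_2\ge 1/\sqrt{m_j}+\epsilon)\le e^{-m_j\epsilon^2/2}$. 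Integrating this tail bounds $\mathbb{E}e^{t\|\Delta_j\|^2}$ by a quantity of order $(1-ct/m_j)^{-1}$ times a constant. Because $t\le m_{min}/(8K)$, the product over $K$ columns stays controlled, and the whole product should be $\le m_{min}/4$, up to the paper's constants.

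\textbf{Step 4 (conclusion).} Combining Steps 2 and 3 yields, with probability $1-\delta$,
\[
\frac{m_{min}-8K}{8K}\,\mathbb{E}_u\sum_j\|\Delta^u_j\|_2^2\le \mathrm{KL}(Q\|P)+\ln\frac{m_{min}}{4\delta}.
\]
Inserting this into Step 1 and taking square roots gives the stated bound. Since the result holds conditionally for every label configuration, it also holds unconditionally.

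The main obstacle is Step 3: obtaining a clean bound on the exponential moment of the squared column deviation, with constants that reproduce exactly $8K/(m_{min}-8K)$ and $\ln(m_{min}/4\delta)$. The quantity $m_{min}-8K$ must stay positive for the integral to converge. I would first try the subgaussian-tail integration described above. If the constants do not match, I would instead apply a matrix Bernstein inequality in the style of Tropp directly to the centered random matrices, accepting slightly different absolute constants.
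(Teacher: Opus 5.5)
The paper does not prove this statement; it imports it from \citet{morvant2012pac}. Your Steps 1, 2 and 4 follow the same PAC-Bayesian skeleton used there: Jensen to pull $\mathbb{E}_u$ outside the squared norm, Donsker--Varadhan change of measure with a data-independent prior, then Markov. Conditioning on the labels is a sound way to make the data-dependent $m_{min}$ legitimate inside $\phi$.

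The real difference is Step 3. Morvant et al.\ stay with the operator norm. They control $\mathbb{E}_{\mathcal S}\exp\bigl(\lambda\|\mathbf{C}^{f_{w+u}}_\mathcal{S}-\mathbf{C}^{f_{w+u}}_\mathcal{D}\|_2^2\bigr)$ via Tropp-style matrix concentration for a sum of independent per-example matrices weighted by $1/m_{y_q}$. This gives a sub-Gaussian operator-norm tail with a dimensional prefactor of order $K$ (from the Hermitian dilation) and variance proxy of order $\sum_j 1/m_j\le K/m_{min}$. Integrating that tail at $\lambda=(m_{min}-8K)/(8K)$ is what produces both $8K/(m_{min}-8K)$ and the $m_{min}/4$ inside the logarithm.

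You instead relax to the Frobenius norm and use that, given the labels, the columns depend on disjoint groups of samples, so a McDiarmid bound per column suffices. This is more elementary and loses nothing here, because $\mathbb{E}\|\cdot\|_F^2\le\sum_j 1/m_j\le K/m_{min}$ is already the scale of the bound. The matrix route buys direct operator-norm control for arbitrary bounded matrix-valued losses, which this statement does not need.

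As written, however, Step 3 does not yet close. A per-column bound of the form ``$(1-ct/m_j)^{-1}$ times a constant'' is not enough: $K$ factors each bounded by an absolute $C>1$ give $C^K$, which exceeds $m_{min}/4$ when $m_{min}$ is close to $8K$. You need each factor to be $1+O(1/K)$, and it is.

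\begin{itemize}
\item Set $\mu_j=\mathbb{E}\|\Delta^u_j\|_2\le m_j^{-1/2}$ and $W_j=(\|\Delta^u_j\|_2-\mu_j)_+$.
\item McDiarmid with bounded differences $\sqrt2/m_j$ gives $\mathbb{P}(W_j\ge\epsilon)\le e^{-m_j\epsilon^2}$, hence $\mathbb{E}e^{2tW_j^2}\le(1-2t/m_j)^{-1}$.
\item Using $\|\Delta^u_j\|_2^2\le 2\mu_j^2+2W_j^2$ and $t<m_{min}/(8K)\le m_j/(8K)$, you get $\mathbb{E}e^{t\|\Delta^u_j\|_2^2}\le e^{1/(4K)}(1-\tfrac{1}{4K})^{-1}$.
\item The product over $j$ is at most $e^{1/4}\cdot\tfrac43<2<m_{min}/4$, since $m_{min}>8K\ge 8$.
\end{itemize}

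So the constants close with room to spare: your route actually yields $\ln(2/\delta)$ in place of $\ln(m_{min}/(4\delta))$, and the matrix Bernstein fallback is unnecessary. Keep the sum rather than $K\max_j$ in Step 1. In your route the factor $K$ enters only through the choice of $t$, and that choice is what keeps the product of the $K$ column moments bounded.
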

Let $\mathcal{S}_{u}$ denote the set of perturbations satisfying
\begin{equation}
\label{eq:su}
\mathcal{S}_{u} \subseteq\left\{u\Big|\max _{x \in \mathcal{X}}| f_{w+u}(x)-\left.f_{w}(x)\right|_{\infty}<\frac{\gamma}{4}
\right\}.
\end{equation}
Let $q$ be the probability density function of $u$.
We define a new distribution $\tilde Q$ restricted to $\mathcal{S}_{u}$, with density
\begin{equation}
\label{eq:qu}
\tilde{q}(\tilde{u})= \begin{cases} \frac{1}{z} q(\tilde{u}) & \tilde{u} \in \mathcal{S}_{u}, \\ 0 & \text { otherwise}, \end{cases}
\end{equation}
where $z$ is a normalizing constant.
By construction of $\tilde Q$, for all $\tilde u \sim \tilde Q$, we have
\begin{equation}
\label{eq:app1.1}
\begin{aligned}
\max _{x \in \mathcal{X}_B} | f_{w+\tilde u}(x)-\left.f_{w}(x)\right|_{\infty}<\frac{\gamma}{4}.
\end{aligned}
\end{equation}
For any $x \in \mathcal{X}$ such that $\arg\max_i f(x)[i] \ne y$, it follows that
\begin{equation}
f_{w+\tilde u}(x)[\arg\max_i f(x)[i]]+\frac{\gamma}{4} \ge f_{w+\tilde u}(x)[y]-\frac{\gamma}{4}.    
\end{equation}
Hence, for all $i \ne j$,
\begin{equation}
    (\mathbf{C}^{f}_\mathcal{D})_{ij} \le (\mathbf{C}^{\tilde Q}_{\mathcal{D},\frac{\gamma}{2}})_{ij}.
\end{equation}
According to the Perron–Frobenius theorem \citep{frobenius1912matrizen}, for all $1 \le i,j \le K$,
$\frac{\partial\|\mathbf{C}\|_2}{\partial (\mathbf{C})_{ij}} \ge 0$.
Therefore,
\begin{equation}
\|\mathbf{C}^{f}_\mathcal{D} \|_2 \le \|\mathbf{C}^{\tilde Q}_{\mathcal{D},\frac{\gamma}{2}}\|_2. 
\end{equation}
Using the inequality $|\|{A}\|_2-\|{B}\|_2| \le\|{A-B}\|_2$ and Theorem~\ref{thm:2.1}, we obtain
\begin{small}
\begin{equation}
\|\mathbf{C}^{\tilde Q}_{\mathcal{D},\frac{\gamma}{2}} \|_2 \le \|\mathbf{C}^{\tilde Q}_{\mathcal{S},\frac{\gamma}{2}} \|_2 + \sqrt{\frac{8 K}{m_{min} - 8K} \left[ \mathrm{KL}(\tilde Q \| P) + \ln \left( \frac{m_{min}}{4\delta} \right) \right]}.
\end{equation}
\end{small}Moreover, for any $x \in \mathcal{X}$, if there exists $\tilde u \in \tilde Q$ such that
$\max_{i \ne y} f_{w+\tilde u}(x)[i] + \tfrac{\gamma}{2} \ge f_{w+\tilde u}(x)[y]$,
then it must also hold that
\begin{equation}
\max_{i\ne y} f_{w}(x)[i]+\gamma \ge f_{w}(x)[y].    
\end{equation}
Thus for all $i\ne j$, we have
\begin{equation}
    (\mathbf{C}^{\tilde Q}_{\mathcal{S},\frac{\gamma}{2}})_{ij} \le (\mathbf{C}^{f_w}_{\mathcal{S},\gamma})_{ij}.
\end{equation}
Applying the same monotonicity argument from Perron–Frobenius, we obtain
\begin{equation}
\|\mathbf{C}^{\tilde Q}_{\mathcal{S},\frac{\gamma}{2}}\|_2 \le \|\mathbf{C}^{f}_{\mathcal{S},\gamma}\|_2.
\end{equation}
Next, let $\mathcal{S}_{u}^c$ denote the complement of $\mathcal{S}_{u}$ and $\tilde q^c$ the normalized density over $\mathcal{S}_{u}^c$.
From (\ref{eq:qu}), the KL divergence decomposes as
\begin{equation}
\mathrm{KL}(q\| p) = z\mathrm{KL}(\tilde q\| p) + (1-z)\mathrm{KL}(\tilde q^c\| p)-H(z),
\end{equation}
where $H(z)=-z \ln z-(1-z) \ln (1-z) \leq 1$ is the binary entropy function. 
Since $\mathrm{KL}$ is always positive, we get 
\begin{equation}
\begin{aligned}
\mathrm{KL}(\tilde{q} \| p)&=\frac{1}{z}[\mathrm{KL}(q \| p)+H(z))-(1-z) \mathrm{KL}(\tilde{q}^c \| p)]\\ 
&\leq 2(\mathrm{KL}(q \| p)+1).
\end{aligned}
\end{equation}
Thus we have 
\begin{equation}
2(\mathrm{KL}(w+u || P)+\ln \frac{3m_{min}}{4\delta})\ge \mathrm{KL}(w+\tilde u || P)+\ln \frac{m_{min}}{4\delta}.
\end{equation}

Therefore, combine the above equations, with probability at least $1-\delta$ over training dataset $\mathcal{S}$, we have
\begin{small}
\begin{equation}\nonumber
\begin{aligned}
&\|\mathbf{C}^{f}_\mathcal{D} \|_2 \le \|\mathbf{C}^{\tilde Q}_{\mathcal{D},\frac{\gamma}{2}}\|_2 \quad\quad\quad\quad\quad\quad  \\
&\le \|\mathbf{C}^{\tilde Q}_{\mathcal{S},\frac{\gamma}{2}} \|_2 + \sqrt{\frac{8 K}{m_{min} - 8K} \left[ \mathrm{KL}(\tilde Q \| P) + \ln \left( \frac{m_{min}}{4\delta} \right) \right]}  \\
&\le \|\mathbf{C}^{f}_{\mathcal{S},\gamma} \|_2 + \sqrt{\frac{8 K}{m_{min} - 8K} \left[ \mathrm{KL}(\tilde Q \| P) + \ln \left( \frac{m_{min}}{4\delta} \right) \right]}   \\
&\le \|\mathbf{C}^{f}_{\mathcal{S},\gamma} \|_2 + 4\sqrt{\frac{K}{m_{min} - 8K} \left[ \mathrm{KL}(Q \| P) + \ln \left( \frac{3m_{min}}{4\delta} \right) \right]}.  
\end{aligned}
\end{equation}
\end{small}

Following \citet{neyshabur2017pac}, the next proof proceeds in two main steps.
First, we determine the maximum perturbation $u$ that can be applied to the weights while still preserving the required margin $\gamma$.
Second, using this allowable perturbation, we evaluate the KL divergence term that appears in the PAC-Bayesian bound.
These two components together yield the desired generalization bound.

We consider a neural network with weight matrices ${W_l}$, $l\in\{1,...,n\}$, and normalize each matrix by its spectral norm $\|W_l\|_2$.
Let $\beta$ denote the geometric mean of the spectral norms:
$$\beta = \left(\prod_{l=1}^n \|W_l\|_2\right)^{\frac{1}{n}},$$
where $n$ is the number of weight matrices in the network. 
We construct a rescaled set of weights ${\widetilde{W}_l}$ by adjusting each $W_l$ according to
$$\widetilde{W}_l = \frac{\beta}{\|W_l\|_2} W_l.$$
Because the ReLU activation function is positively homogeneous, this reparameterization preserves the functional behavior of the network.
Thus, the resulting model $f_{\widetilde{W}}$ computes exactly the same output as the original model $f_{W}$, allowing us to work with the normalized parameterization without loss of generality.

Furthermore, note that the product of the spectral norms is preserved under this normalization: $$\prod_{l=1}^n \|W_l\|_2=\prod_{l=1}^n \|\widetilde{W}_l\|_2.$$ 
In addition, the ratio between the Frobenius norm and the spectral norm remains unchanged for every layer:
$$\frac{\|W_l\|_F}{\|W_l\|_2} = \frac{\|\widetilde{W}_l\|_F}{\|\widetilde{W}_l\|_2}.$$
Therefore, the excess error appearing in the theorem is invariant under this normalization.
It is thus sufficient to establish the result for the normalized weights $\widetilde{W}$, and we may assume without loss of generality that $\|W_l\|_2 = \beta$ for any layer $l$.

We take the prior distribution $P$ to be a zero-mean Gaussian with diagonal covariance $\sigma^2 \mathbf{I}$ and introduce perturbations $U \sim \mathcal{N}(0, \sigma^2 \mathbf{I})$, where $\sigma$ will later be chosen as a function of $\beta$.
Since the prior must not depend on the learned weights $W$ or their norm, $\sigma$ is selected using an estimate $\tilde{\beta}$ instead of $\beta$ itself.
To ensure coverage over all possible values of $\beta$, we compute the PAC-Bayesian bound for each $\tilde{\beta}$ in a predefined grid.
This gives a generalization guarantee for every $W$ satisfying: $|\beta - \tilde{\beta}| \leq \frac{1}{n} \beta$. 

Thus, each feasible $\beta$ is close to some $\tilde{\beta}$ in the grid, and applying a union bound across all grid values yields a uniform guarantee.
In particular, for any such pair $(\beta, \tilde{\beta})$, we have
$$\frac{1}{e} \beta^{n-1} \leq \tilde{\beta}^{n-1} \leq e \beta^{n-1}.$$

Following \citet{bandeira2021spectral}, and noting that each perturbation matrix $U_l$ satisfies $U_l \sim \mathcal{N}(0, \sigma^2 \mathbf{I})$ (equivalently, $u_l = \mathrm{vec}(U_l)$), we obtain the following tail bound on its spectral norm:
\begin{equation}
\label{eq:uwbound}
\mathbb{P}_{U_l \sim \mathcal{N}(0, \sigma^2 \mathbf{I})}\left[\|U_l\|_2 > t\right] \leq 2h \exp\left(-\frac{t^2}{2h\sigma^2}\right),
\end{equation}
where $h$ denotes the width of the hidden layers.
Applying a union bound across all $n$ layers, we conclude that with probability at least $1/2$, the perturbation $U_l$ in each layer is bounded by $\sigma \sqrt{2h \ln (4 nh)}$.

Plugging the bound from \citet{neyshabur2017pac}, we have that
\begin{equation}
\label{eq:cod1}
\begin{aligned}
&\max _{x \in \mathcal{X}}\left\|f_{w+U}(x)-f_{w}(x)\right\|_2 \leq e B \beta^n \sum_l \frac{\left\|U_l\right\|_2}{\beta} \\
& =e B \beta^{n-1} \sum_l\left\|U_l\right\|_2 \\
&\leq e^2 n B \tilde{\beta}^{n-1} \sigma \sqrt{2 h \ln (4 n h)} \leq \frac{\gamma}{4},
\end{aligned}
\end{equation}
where $B$ is the largest $\ell_2$ norm of input samples.

To ensure that (\ref{eq:cod1}) holds and using the fact that $\tilde{\beta}^{n-1} \le e\beta^{n-1}$, we choose the largest valid value of $\sigma$ as
\begin{equation}\nonumber
\sigma = \frac{\gamma}{114 n B \sqrt{h \ln (4 n h)}\prod_{l=1}^n \|W_l\|_2^{\frac{n-1}{n}}}.
\end{equation}

With this choice of $\sigma$, the perturbation $U$ satisfies the required margin condition.
We now compute the KL term for the selected prior $P$ and posterior $Q$:
\begin{equation}\nonumber
\begin{aligned}
&\mathrm{KL}(w+u \| P) \le \frac{\|w\|_2^2}{2\sigma^2}\\
&=\frac{\sum_{l=1}^n\|W_l\|_F^2}{2\sigma^2}\\
&\le \mathcal{O}\left(B^2n^2h\ln(nh)\frac{\prod_{l=1}^n \|W_l\|_2^2}{\gamma^2} \sum_{l=1}^n \frac{\|W_l\|^2_F}{\|W_l\|^2_2} \right).
\end{aligned}
\end{equation}
Then, we can give a union bound over different choices of $\tilde \beta$.
We only need to form the bound for $\left(\frac{\gamma}{2 B}\right)^{\frac{1}{n}} \leq \beta \leq\left(\frac{\gamma \sqrt{m}}{2 B}\right)^{\frac{1}{n}}$ which can be covered using a cover of size $nm^{\frac{1}{2n}}$ as discussed in \citet{neyshabur2017pac}.
Thus, with probability $\ge 1-\delta$, for any $\tilde \beta$ and for all $w$ such that $|\beta-\tilde{\beta}| \leq \frac{1}{n} \beta$, we have:
\begin{small}
\begin{equation}
\begin{aligned}
&\|\mathbf{C}^{f}_\mathcal{D} \|_2 \leq \| \mathbf{C}^{f}_{\mathcal{S},\gamma}\|_2 \\
&+ \mathcal{O}\left(\sqrt{\frac{K}{(m_{min} - 8K)\gamma^2} \left[ \Phi(f_w) + \ln \left( \frac{n m_{min}}{\delta} \right) \right]}\right),
\end{aligned}
\end{equation}
\end{small}where $\Phi(f_w)=B^2n^2h\ln(nh)\prod_{l=1}^n ||W_l||_2^2 \sum_{l=1}^n \frac{||W_l||^2_F}{||W_l||^2_2}$.

% \gaojie{Here!!}
To relate the spectral norm bound obtained above to adversarial performance, we use the known relationship between the $\ell_1$ norm and the spectral norm of a matrix. This allows us to convert the bound on $\|\mathbf{C}^{f}_{\mathcal{D}}\|_2$ into a bound on the $\ell_1$ norm, which directly characterizes the worst-class error.
In particular, for any confusion matrix $\mathbf{C} \in \mathbb{R}^{K \times K}$, we have
$\|\mathbf{C}^{f}_{\mathcal{D}} \|_1 \leq \nu' \|\mathbf{C}^{f}_{\mathcal{D}} \|_2$, where $\nu'$ is a constant that depends on the number of classes $K$ and is upper bounded by $\sqrt{K}$.

Then, given $\Lambda = \mathrm{diag}(\lambda_1, \dots, \lambda_K)$, for an adjusted constant $\nu$ corresponding to $\nu'$, for all $j$, we have
\begin{equation}\nonumber
\begin{aligned}
e_j &\leq \frac{1}{\lambda_{j}}\bigl\| \mathbf{C}_{\mathcal D}^{f} \Lambda \bigr\|_{1}
\;\\
&\le\;
\frac{\nu}{\lambda_{j}}\,\bigl\| \mathbf{C}_{\mathcal{S},\gamma}^{f} \Lambda \bigr\|_{2}\\
&+ \mathcal{O}\left(\sqrt{\frac{K}{(m_{min} - 8K)\gamma^2} \left[ \Phi(f_w) + \ln \left( \frac{n m_{min}}{\delta} \right) \right]}\right).
\end{aligned}
\end{equation}

Hence, proved. \hfill $\square$

\section{Stability Analysis of the EMA-based Confusion Estimator} \label{appedix:stable}

Let \(\mathbf{A}_t := \widehat{\mathbf{C}}_t\,\mathbf{W}\) and \(\mathcal{R}_t(f) = \|\mathbf{A}_t\|_2\).
Write the top singular triplet of \(\mathbf{A}_t\) as \((\sigma_t,\mathbf{u}_t,\mathbf{v}_t)\) with \(\|\mathbf{u}_t\|=\|\mathbf{v}_t\|=1\).
A standard subgradient of the spectral norm gives
\[
\frac{\partial \mathcal{R}_t}{\partial \mathbf{A}_t} \;=\; \mathbf{u}_t\,\mathbf{v}_t^\top,
\qquad
\frac{\partial \mathcal{R}_t}{\partial \widehat{\mathbf{C}}_t}
\;=\;
\mathbf{u}_t\,\mathbf{v}_t^\top \mathbf{W}^\top.
\]
Because \(\widehat{\mathbf{C}}_t=\beta\,\widehat{\mathbf{C}}_{t-1}+(1-\beta)\,\mathbf{C}_t\) and \(\widehat{\mathbf{C}}_{t-1}\) is treated as a constant (stop-gradient), the chain rule yields
\begin{align}
\frac{\partial \mathcal{R}_t}{\partial \mathbf{C}_t}
&=\;(1-\beta)\,\mathbf{u}_t\,\mathbf{v}_t^\top \mathbf{W}^\top,
\\[-2pt]
\nabla_w \mathcal{R}_t(f)
&=\;(1-\beta)\,\Big\langle \mathbf{u}_t\,\mathbf{v}_t^\top \mathbf{W}^\top,\;
\frac{\partial \mathbf{C}_t}{\partial w}\Big\rangle.
\end{align}
Hence the EMA introduces an explicit gain factor \(1-\beta\) on the gradient path from the batch confusion to the parameters:
\[
\big\|\nabla_w \mathcal{R}_t(f)\big\|
\;\le\;
(1-\beta)\,\|\mathbf{W}\|_2\,
\Big\|\frac{\partial \mathbf{C}_t}{\partial w}\Big\|_F,
\]
which attenuates stochastic spikes and improves step-size robustness.

\smallskip
\noindent\textbf{Variance reduction.}
Because \(\nabla_w \mathcal{R}_t\) depends on the current batch only through \(\mathbf{C}_t\), and \(\mathbf{u}_t\,\mathbf{v}_t^\top \mathbf{W}^\top\) evolves smoothly (above), the stochastic variance satisfies the proxy bound
\[
\mathrm{Var}\!\left[\nabla_w \mathcal{R}_t\right]
\;\lesssim\;
(1-\beta)^2\,\|\mathbf{W}\|_2^2\,
\mathrm{Var}\!\left[\tfrac{\partial \mathbf{C}_t}{\partial w}\right],
\]
showing EMA’s quadratic damping of gradient variance.

\section{More Experiments} \label{appedix:experiment}

\subsection{Experimental Setup} \label{appedix:experimentsetup}

\subsubsection{Datasets} \label{appedix:datasets}
We briefly introduce the four long-tailed benchmarks used in this study.
CIFAR100-LT is derived from the balanced CIFAR-100 dataset, containing 60,000 images of size 
32×32 from 100 object categories.
It serves as a compact benchmark for evaluating long-tailed classification under limited image resolution.
Tiny-ImageNet-LT is constructed from Tiny-ImageNet, which includes 200 classes with 500 training and 50 validation images per class in the balanced version.
It provides a mid-scale evaluation setting that bridges the gap between small-scale and large-scale datasets, featuring higher visual diversity and more complex backgrounds than CIFAR100-LT.
ImageNet-LT is a large-scale benchmark derived from ImageNet-2012, containing 1,000 categories with around 115K training images after sampling.
It retains the semantic richness and visual complexity of the full ImageNet while presenting a severe class imbalance, making it a standard testbed for long-tailed recognition at scale.
iNaturalist2018 is a real-world long-tailed dataset of 437K natural images from 8,142 species, collected from community-driven observations.
It exhibits extreme imbalance and fine-grained inter-class similarity, reflecting the natural frequency of species occurrences.
Following common practice, we evaluate performance on three subsets—Head, Medium, and Tail—based on the number of training samples per class.

\subsubsection{Implementation Details} \label{appedix:implementation_details}
All models are implemented in PyTorch and equipped with  NVIDIA RTX~3090 GPUs and A100 GPUs.
The following paragraphs describe the exact training procedures for both training from scratch and fine-tuning pre-trained models.
\paragraph{Training from Scratch.}
For long-tailed training from scratch, we follow standard practices commonly adopted in prior works. 
All models are optimized using AdamW with an initial learning rate of $1\times10^{-4}$, a weight decay follows~\cite{alshammari2022long}, and a batch size of 128. 
A cosine annealing scheduler is used to decay the learning rate from its initial value to zero throughout training.  
All Vision Transformer models are initialized unless otherwise noted.  

\paragraph{Fine-tuning Pre-trained Models.}
For experiments involving pre-trained backbones, we follow the commonly adopted fine-tuning protocol for ViT models.  
We fine-tune using AdamW with a learning rate of $2\times10^{-4}$ and batch size of 128.  
Models are fine-tuned for 100 epochs on CIFAR100-LT, ImageNet-LT, Tiny-ImageNet-LT, and iNaturalist2018.  
During fine-tuning, the patch embedding and transformer backbone weights are initialized from ImageNet-1K pre-trained checkpoints.

\subsection{Additional Worst-class Results} \label{appedix:worstresults}
In the Section~\ref{full:main}, we presented the core analysis of worst-class and overall accuracies on the ImageNet-LT benchmark, revealing two key gaps: 
(\textit{i}) the worst-class test accuracy lags significantly behind the overall accuracy, and 
(\textit{ii}) the worst-class test accuracy is much lower than its training counterpart.  
To further validate the generality of this observation, Figure~\ref{fig:worst_overall_bars_cifar} illustrates the corresponding results on CIFAR100-LT using ViT-Small as the backbone.
We observe consistent trends with those on ImageNet-LT. 
Conventional re-weighting or re-sampling methods achieve high worst-class accuracy on the training set but fail to generalize to unseen tail samples, leading to a pronounced drop in the test worst-class accuracy. 
Methods introducing feature-level balancing or meta-regularization slightly mitigate this issue but still show a large disparity between training and testing.  
In contrast, our proposed method achieves substantially higher test worst-class accuracy while maintaining competitive overall performance, demonstrating stable tail generalization and reduced overfitting.
These results confirm that the phenomena observed on large-scale ImageNet-LT are not dataset-specific but persist across different long-tailed settings.  
The improvement on CIFAR100-LT further verifies the robustness and transferability of our approach in alleviating the worst-class generalization gap.

\begin{figure*}[htb]
  \centering
  \captionsetup{aboveskip=2pt, belowskip=0pt} % 仅对本图生效
  \includegraphics[width=0.9\linewidth]{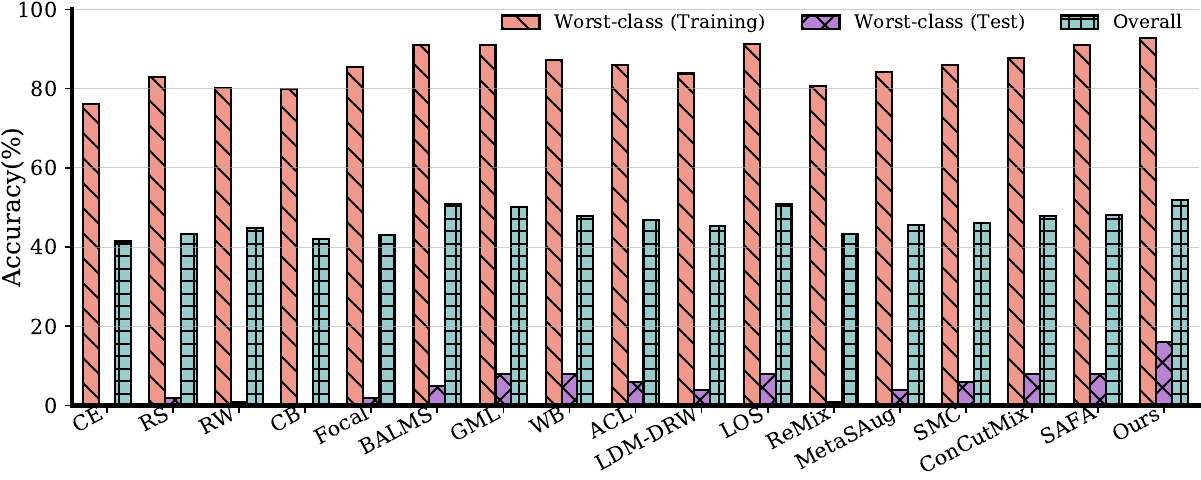}
  \vspace{-4pt}
  \caption{Comparison of worst-class and overall accuracies across representative long-tailed learning methods on CIFAR100-LT using Pretrained ViT-Small as the backbone. The three bars for each method correspond to the worst-class accuracy on the training set (left, red-hatched), the worst-class accuracy on the test set (middle, purple-hatched), and the overall test accuracy (right, green-hatched)}
  \label{fig:worst_overall_bars_cifar}
  \vspace{-8pt}
\end{figure*}

\subsection{Additional Results Across Backbones} \label{appedix:models}
\subsubsection{Additional Results Across Backbones on Pre-trained Models} 

To further demonstrate the backbone-agnostic property of our framework, as discussed in Section~\ref{full:backbone}, we conduct experiments using the pre-trained ResNet architecture on Tiny-ImageNet-LT under varying imbalance factors (IF = 1:50, 1:100, 1:200). 
The results, summarized in Table~\ref{tab:resnet_tiny_if}, confirm that our method maintains consistent superiority over existing long-tailed learning approaches across all imbalance settings.

As shown in the table, the proposed CAR achieves accuracies of 64.50\%, 59.49\%, and 54.16\% under increasing imbalance severity, clearly outperforming representative baselines such as BALMS, GML, and MetaSAug. 
These results validate that our method is not restricted to Transformer-based architectures but also generalizes effectively to convolutional networks. 
This cross-backbone consistency reinforces that our approach captures a universal optimization mechanism capable of mitigating imbalance-induced bias across diverse model families.

\begin{table}[htb]
\centering
\small
\setlength{\tabcolsep}{5pt}
\caption{Top-1 accuracy (\%) of pre-tranied ResNet on Tiny-ImageNet-LT under different imbalance factors (IF = 50, 100, and 200).
The best results are highlighted in \textbf{bold}.}
\label{tab:resnet_tiny_if}
\begin{tabular}{lccc}
\toprule
\textbf{Methods} & \textbf{IF=50} & \textbf{IF=100} & \textbf{IF=200} \\
\midrule
CE               & 54.86 & 50.41 & 47.90 \\
% RS~\cite{shen2016relay}                & 61.40 & 53.18 & 50.61 \\
% RW~\cite{khan2017cost}                & 62.01 & 54.69 & 50.58 \\
Focal~\cite{lin2017focal}       & 61.12 & 54.78 & 50.85 \\
CB~\cite{cui2019classbalanced}          & 61.49 & 53.01 & 49.41 \\
LDAM-DRW~\cite{cao2019learning}         & 57.79 & 52.18 & 49.42 \\
BALMS~\cite{ren2020balms}              & 63.16 & 58.70 & 52.95 \\
ReMix~\cite{chou2020remix}           & 56.25 & 52.75 & 49.06 \\
BBN~\cite{zhou2020bbn}              & 59.93 & 54.79 & 51.35 \\
MetaSAug~\cite{li2021metasaug}         & 61.77 & 54.74 & 50.77 \\
CMO~\cite{park2022majority}              & 60.74 & 55.85 & 50.05 \\
SAFA~\cite{hong2022safa}             & 62.54 & 57.53 & 52.55 \\
WB~\cite{alshammari2022long} & 60.03 & 54.01 & 50.29 \\
GML~\cite{du2023no}              & 62.95 & 58.11 & 53.12 \\
ConCutMix~\cite{pan2024enhanced}        & 60.36 & 56.19 & 51.48 \\
LOS~\cite{sunrethinking}             &63.04   &58.23	   &53.27 \\

\rowcolor{red!10}
\texttt{CAR}  (Ours)            & \textbf{64.50} & \textbf{59.49} & \textbf{54.16} \\
% \quad + ReMix~\cite{chou2020remix}      & 66.88 & 60.85 & 55.60 \\
% \quad + MetaSAug~\cite{li2021metasaug}   & 66.10 & 61.41 & 55.73 \\
% \quad + CMO~\cite{park2022majority}        & 67.84 & 61.06 & 56.51 \\
% \quad + ConCutMix~\cite{pan2024enhanced}  & 67.99 & 62.97 & 56.35 \\
% \quad + + SAFA~\cite{hong2022safa}        & 68.14 & 62.73 & 57.47 \\
\bottomrule
\end{tabular}
\end{table}

\subsubsection{Additional Results across Different ViT Backbone Sizes} 

To further evaluate the scalability of our framework, Table~\ref{tab:tiny_backbones_appendix} reports the performance on Tiny-ImageNet-LT using three pre-trained ViT backbones of different model sizes (ViT-Tiny, ViT-Base, and ViT-Large) with an imbalance factor of 1:100. 
This experiment complements the backbone generalization analysis presented in Section~\ref{full:backbone} and aims to assess whether the proposed method consistently benefits larger-capacity models.

Across all backbone sizes, our method achieves the highest top-1 accuracy, improving upon strong baselines such as BALMS, GML, and SAFA. 
As the model capacity increases from ViT-Tiny to ViT-Large, all methods exhibit overall performance gains due to stronger feature expressiveness. 
However, the relative improvement brought by our approach remains stable, showing improvement gain over the best competing method under each configuration. 
This indicates that our regularization mechanism continues to enhance class balance and generalization regardless of network scale.

The results here align with the findings on ImageNet-LT and CIFAR100-LT in Section~\ref{full:backbone}: our framework scales effectively with model size. 
This cross-scale consistency confirms that the proposed approach is not only architecture-agnostic but also size-agnostic, providing reliable benefits for both lightweight and large pre-trained transformer backbones.

\begin{table}[htb]
\centering
\footnotesize
\setlength{\tabcolsep}{3.6pt} 
\caption{Top-1 accuracy (\%) on Tiny-ImageNet-LT with different Pre-trianed ViT backbone sizes on Tiny-ImageNet-LT with an IF of 1:100.
The best results are highlighted in \textbf{bold}.}
\label{tab:tiny_backbones_appendix}
\begin{tabular}{lccc}
\toprule
\textbf{Methods} & \textbf{ViT-Tiny} & \textbf{ViT-Base} & \textbf{ViT-Large} \\
\midrule
CE                        & 40.64 & 65.19 & 70.40 \\
% RS~\cite{shen2016relay}                        & 44.34 & 69.81 & 75.19 \\
% RW~\cite{khan2017cost}                        & 45.46 & 69.30 & 76.82 \\
Focal~\cite{lin2017focal}                 & 45.66 & 69.64 & 77.60 \\
CB~\cite{cui2019classbalanced}            & 44.08 & 68.36 & 75.05 \\
LDAM-DRW~\cite{cao2019learning}           & 42.28 & 67.61 & 73.59 \\
BALMS~\cite{ren2020balms}                 & 48.18 & 72.21 & 79.88 \\
ReMix~\cite{chou2020remix}                & 43.28 & 67.91 & 74.48 \\
BBN~\cite{zhou2020bbn}                  & 45.57 & 69.72 & 76.03 \\
MetaSAug~\cite{li2021metasaug}            & 46.74 & 69.62 & 75.30 \\
CMO~\cite{park2022majority}            & 45.62 & 70.53 & 75.25 \\
SAFA~\cite{hong2022safa}                  & 46.56 & 71.02 & 78.16 \\
WB~\cite{alshammari2022long}              & 44.09 & 68.50 & 75.68 \\
GML~\cite{du2023no}                       & 48.43 & 71.99 & 79.38 \\
ConCutMix~\cite{pan2024enhanced}          & 46.38 & 71.38 & 77.51 \\
LOS~\cite{sunrethinking}                   &  47.55	&72.86	&80.47\\
\rowcolor{red!10}
\texttt{CAR} (Ours)                            & \textbf{49.44} & \textbf{74.83} & \textbf{81.65} \\
% \quad + ReMix~\cite{chou2020remix}        & 51.69 & 75.66 & 82.56 \\
% \quad + MetaSAug~\cite{li2021metasaug}    & 51.82 & 75.65 & 83.70 \\
% \quad + CMO~\cite{park2022majority}    & 51.36 & 77.71 & 83.89 \\
% \quad + ConCutMix~\cite{pan2024enhanced}  & 53.60 & 76.09 & 83.13 \\
% \quad + SAFA~\cite{hong2022safa}          & 53.45 & 75.38 & 82.19 \\
\bottomrule
\end{tabular}
\end{table}

% \subsection{Effect of Gradient Detachment}

% To examine the effect of applying the stop-gradient operation in the margin term of the differentiable confusion surrogate, we conduct an ablation study on CIFAR100-LT using ViT-Small. As shown in Table~\ref{tab:detach_ablation}, incorporating stop-gradient yields a clear improvement of +0.94\% in top-1 accuracy. This indicates that detaching the gradient flow from the margin computation helps stabilize optimization and prevents mutual interference between the classification logits and the confusion-based regularization. Without the stop-gradient operation, the margin term tends to overreact to local fluctuations in the class predictions, leading to suboptimal convergence. Therefore, applying stop-gradient proves essential for maintaining a smoother optimization landscape and achieving better overall generalization.

% \begin{table}[htb]
%   \centering
%   \captionof{table}{Effect of applying stop-gradient on the margin term in the differentiable confusion surrogate.}
%   \label{tab:detach_ablation}
%   \begin{tabular}{lc}
%     \toprule
%     \textbf{Variant} & \textbf{Accuracy (\%)} \\
%     \midrule
%     w/ stop-gradient in margin term & 51.85 \\
%     w/o stop-gradient               & 50.91 \\
%     \bottomrule
%   \end{tabular}
% \end{table}

\subsection{Additional Results Across Imbalance Factors based on Pre-trained Models} \label{appedix:if}

To further verify the generalization consistency of our framework, we conduct additional experiments using the pre-trained ViT-Small backbone on CIFAR100-LT and Tiny-ImageNet-LT under different imbalance factors (IF = 50 and 200). 
As reported in Table~\ref{tab:lt_ratio_appendix}, our method consistently achieves the best performance across both datasets, confirming its strong adaptability to diverse long-tailed scenarios.

Specifically, on Tiny-ImageNet-LT, our approach attains the highest accuracy of 75.87\% and 65.79\% under IF=50 and IF=200, respectively, surpassing existing advanced baselines such as BALMS, GML, and SAFA. 
A similar performance trend is observed on CIFAR100-LT, where our model maintains leading results of 81.11\% (IF=50) and 73.11\% (IF=200), demonstrating stable improvements over all representative competitors. 
Compared to conventional methods (e.g., ReMix, MetaSAug, and CMO), which exhibit substantial performance degradation as imbalance severity increases, our framework effectively preserves stable generalization.

Overall, these results confirm that our \texttt{CAR} not only mitigates overfitting to dominant classes but also enables robust transferability from pre-trained representations to imbalanced downstream datasets, reinforcing the conclusions drawn in Section~\ref{full:if}.

\begin{table}[htb]
\centering
\begingroup
\small
\setlength{\tabcolsep}{4pt}
\caption{Top-1 accuracy (\%) under different imbalance factors (IF) on Tiny-ImageNet-LT and CIFAR100-LT using pre-trained ViT-Small as the backbone. 
The best results are highlighted in \textbf{bold}.}
\label{tab:lt_ratio_appendix}
\begin{tabular}{lcccc}
\toprule
\multirow{2}{*}{\textbf{Methods}} & \multicolumn{2}{c}{\textbf{Tiny-ImageNet-LT}} &
\multicolumn{2}{c}{\textbf{CIFAR100-LT}} \\
\cmidrule(lr){2-3}\cmidrule(lr){4-5}
& IF=50 & IF=200 & IF=50 & IF=200 \\
\midrule
CE                 & 65.12 & 54.08 & 72.85 & 65.64 \\
% RS~\cite{shen2016relay}       & 69.55 & 57.23 & 76.71 & 67.38 \\
% RW~\cite{khan2017cost}        & 70.13 & 59.92 & 77.70 & 68.39 \\
Focal~\cite{lin2017focal}      & 70.24 & 59.56 & 78.39 & 69.51 \\
CB~\cite{cui2019classbalanced} & 69.45 & 57.62 & 76.81 & 67.65 \\
LDAM-DRW~\cite{cao2019learning} & 68.43 & 58.58 & 74.68 & 66.44 \\
BALMS~\cite{ren2020balms}      & 73.39 & 63.52 & 80.40 & 71.84 \\
ReMix~\cite{chou2020remix}     & 68.51 & 58.36 & 74.69 & 66.85 \\
BBN~\cite{zhou2020bbn}      & 70.31 & 60.82 & 77.65 & 68.75 \\
MetaSAug~\cite{li2021metasaug} & 70.32 & 61.65 & 77.12 & 69.42 \\
CMO~\cite{park2022majority} & 71.25 & 62.13 & 78.10 & 69.07 \\
SAFA~\cite{hong2022safa}       & 73.12 & 63.10 & 80.06 & 71.15 \\
WB~\cite{alshammari2022long}   & 70.78 & 61.44 & 78.23 & 69.79 \\
GML~\cite{du2023no}            & 73.86 & 63.94 & 80.26 & 72.13 \\
ConCutMix~\cite{pan2024enhanced} & 72.64 & 62.93 & 79.47 & 70.97 \\
LOS~\cite{sunrethinking}       & 74.23 & 64.10 & 80.55 & 72.54 \\
\rowcolor{red!10}
\texttt{CAR} (Ours)  & \textbf{75.87} & \textbf{65.79} & \textbf{81.11} & \textbf{73.11} \\
\bottomrule
\end{tabular}
\endgroup
\end{table}

\subsection{Additional Results of Complementary to Data Augmentation}  \label{appedix:dataaug}

\subsubsection{Additional Results of Complementary to Data Augmentation on Training from Scratch}
To evaluate the compatibility of our framework with long-tailed data augmentation techniques under training from scratch settings, as presented in Section~\ref{full:dataaug} we combine our method with several representative strategies, including ReMix~\cite{chou2020remix}, MetaSAug~\cite{li2021metasaug}, CMO~\cite{park2022majority}, ConCutMix~\cite{pan2024enhanced}, and SAFA~\cite{hong2022safa}. 
As shown in Table~\ref{tab:backbone_2_appendix}, our method consistently improves the performance of all augmentation baselines across diverse backbones, including ViT variants (Tiny, Base, Large), ResNet, and Swin.

Specifically, the integration with ConCutMix and SAFA achieves the most significant gains (e.g., 69.26\% $\rightarrow$ 73.39\% on ViT-Large and 56.38\% $\rightarrow$ 60.62\% on Swin), while other combinations such as ReMix and CMO also benefit from 1–2\% accuracy improvements. 
These consistent gains highlight that our frequency-weighted spectral regularization effectively complements data-level balancing strategies by further stabilizing optimization and reducing class-wise dominance.

Overall, these results confirm that the proposed framework not only generalizes across different model architectures but also synergizes with diverse long-tailed augmentation methods, delivering stable and transferable improvements even when trained from scratch.

\begin{table}[htb]
\centering
\footnotesize
\setlength{\tabcolsep}{3pt}
\renewcommand{\arraystretch}{0.96}
\caption{Top-1 accuracy (\%) on ImageNet-LT across different backbones. 
Results include ViT variants (Tiny/Base/Large), ResNet, and Swin.}
\label{tab:backbone_2_appendix}
\begin{tabular}{lccccc}
\toprule
\textbf{Methods} & \textbf{ViT-Tiny} & \textbf{ViT-Base} & \textbf{ViT-Large} & \textbf{ResNet} & \textbf{Swin} \\
\midrule
\texttt{CAR}   (Ours)                            & 46.35 & 63.79 & 69.26 & 50.27 & 56.38 \\
\quad + ReMix~\cite{chou2020remix}        & 47.77 & 65.55 & 71.02 & 51.93 & 58.05 \\
\quad + MetaSAug~\cite{li2021metasaug}    & 48.62 & 66.22 & 71.34 & 52.49 & 58.61 \\
\quad + CMO~\cite{park2022majority}    & 48.88 & 67.52 & 72.53 & 53.66 & 59.13 \\
\quad + SAFA~\cite{hong2022safa}          & 50.94 & 68.77 & 74.90 & 54.34 & 60.62 \\
\quad + ConCutMix~\cite{pan2024enhanced}  & 50.11 & 67.40 & 73.39 & 53.86 & 60.00 \\

\bottomrule
\end{tabular}
\end{table}

\subsubsection{Additional Results of Complementary to Data Augmentation on Fine-tuning from Pre-trained Models}

To further examine the general applicability of our framework under pre-trained settings, as given in Section~\ref{full:dataaug}, we integrate it with representative long-tailed data augmentation approaches, including ReMix~\cite{chou2020remix}, MetaSAug~\cite{li2021metasaug}, CMO~\cite{park2022majority}, ConCutMix~\cite{pan2024enhanced}, and SAFA~\cite{hong2022safa}. 
As shown in Table~\ref{tab:ours_aug_appendix}, our method consistently improves these augmentation baselines across diverse pre-trained backbones, including ViT variants (Tiny, Base, Large), ResNet, and Swin.

Specifically, when combined with ConCutMix and SAFA, our approach achieves notable accuracy gains (e.g., 81.65\% $\rightarrow$ 84.13\% on ViT-Large and 71.33\% $\rightarrow$ 74.09\% on Swin), while also improving ReMix and CMO by around 2\% under the same backbone. 
These results demonstrate that the proposed frequency-weighted spectral regularization remains effective even with pre-trained feature representations, providing complementary benefits to advanced data augmentation pipelines.

Overall, this synergy between our regularization framework and long-tailed data augmentation methods further validates the adaptability of our methos, ensuring stable performance improvements across both Transformer and convolutional backbones under pre-trained initialization.

\begin{table}[htb]
\centering
\begingroup
\footnotesize
\setlength{\tabcolsep}{3pt}
\renewcommand{\arraystretch}{0.96}
\caption{Top-1 accuracy (\%) on Tiny-ImageNet-LT across different pre-trained backbones. 
Results include ViT variants (Tiny/Base/Large), ResNet, and Swin.}
\label{tab:ours_aug_appendix}
\begin{tabular}{lccccc}
\toprule
\textbf{Methods} & \textbf{ViT-Tiny} & \textbf{ViT-Base} & \textbf{ViT-Large} & \textbf{ResNet} & \textbf{Swin} \\
\midrule
\texttt{CAR}  (Ours)                      & 49.44 & 74.83 & 81.65 & 59.49 & 71.13 \\
\quad + ReMix~\cite{chou2020remix}       & 51.69 & 75.66 & 82.56 & 60.85 & 72.75 \\
\quad + MetaSAug~\cite{li2021metasaug}   & 51.82 & 75.65 & 83.70 & 61.41 & 73.53 \\
\quad + CMO~\cite{park2022majority}   & 51.36 & 76.71 & 83.89 & 61.06 & 73.06 \\
\quad + SAFA~\cite{hong2022safa}         & 53.45 & 77.38 & 84.19 & 62.73 & 74.09 \\
\quad + ConCutMix~\cite{pan2024enhanced} & 53.60 & 77.09 & 84.13 & 62.97 & 73.47 \\

\bottomrule
\end{tabular}
\endgroup
\end{table}

\subsubsection{Additional Results of Complementary to Data Augmentation under Different Imbalanced Factors}

We further evaluate the compatibility of our framework with long-tailed data augmentation strategies under pre-trained settings with different imbalanced factors (IF) to confirm the conclusion in Section~\ref{full:dataaug}. 
As shown in Table~\ref{tab:lt_aug_v2_appendix}, we combine the proposed method with representative augmentation approaches, including ReMix~\cite{chou2020remix}, MetaSAug~\cite{li2021metasaug}, CMO~\cite{park2022majority}, ConCutMix~\cite{pan2024enhanced}, andSAFA~\cite{hong2022safa}, and evaluate them on Tiny-ImageNet-LT and CIFAR100-LT using a pre-trained ViT-Small backbone under varying imbalance factors (IF = 50, 100, 200).

Across all imbalance settings, our method consistently improves the baselines, demonstrating strong compatibility and stability. 
For instance, on Tiny-ImageNet-LT, the integration withConCutMix and SAFA achieves notable gains, while on CIFAR100-LT, the combined models further enhance accuracy under moderate imbalance (IF=100). 
These consistent improvements suggest that our method effectively complements data-level augmentation methods by mitigating class imbalance and enhancing feature generalization.

Overall, these results demonstrate that our method, when initialized from pre-trained weights, consistently integrates with diverse long-tailed data-augmentation strategies across different datasets and imbalance factors, yielding further performance gains.

% \begin{table}[htb]
% \centering
% \footnotesize
% \setlength{\tabcolsep}{2.8pt}
% \renewcommand{\arraystretch}{0.96}
% \caption{Top-1 Accuracy (\%) of Pre-trained ViT-Small on Tiny-ImageNet-LT and CIFAR100-LT under different imbalance factors (IF). 
% ``+'' indicates the combination of our method with other long-tailed data augmentation methods.}
% \label{tab:lt_aug_v2_appendix}
% \begin{tabular}{l|ccc|ccc}
% \toprule
% \multirow{2}{*}{Methods} &
% \multicolumn{3}{c|}{Tiny-ImageNet-LT} &
% \multicolumn{3}{c}{CIFAR100-LT} \\
% & IF=50 & IF=100 & IF=200 & IF=50 & IF=100 & IF=200 \\
% \midrule
% \texttt{CAR} (Ours)         & 75.87 & 72.97 & 65.79 & 81.11  & 79.37 & 73.11 \\
% \quad + ReMix~\cite{chou2020remix}    & 76.53 & 73.62 & 66.50 & 82.03 & 80.37 & 74.34 \\
% \quad + MetaSAug~\cite{li2021metasaug} & 76.96 & 74.07 & 67.35 & 82.11 & 81.07 & 74.31 \\
% \quad + CMO~\cite{park2022majority}    & 77.52 & 74.78 & 67.79 & 82.23 & 81.65 & 75.06 \\
% \quad + ConCutMix~\cite{pan2024enhanced} & 77.89 & 75.84 & 68.90 & 83.03 & 82.12 & 76.73 \\
% \quad + SAFA~\cite{hong2022safa}   & 78.32 & 75.65 & 68.36 & 83.18 & 82.61 & 77.17 \\
% \bottomrule
% \end{tabular}
% \end{table}

\begin{table}[htb]
\centering
\footnotesize
\setlength{\tabcolsep}{2.8pt}
\renewcommand{\arraystretch}{0.96}
\caption{Top-1 accuracy (\%) of pre-trained ViT-Small on Tiny-ImageNet-LT and CIFAR100-LT under different imbalance factors (IF).
``+'' indicates the combination of our method with other long-tailed data augmentation methods.}
\label{tab:lt_aug_v2_appendix}
\begin{tabular}{l V ccc V ccc}
\toprule
\multirow{2}{*}{\textbf{Methods}} &
\multicolumn{3}{c}{\textbf{Tiny-ImageNet-LT}} &
\multicolumn{3}{c}{\textbf{CIFAR100-LT}} \\
\cmidrule(lr){2-4}\cmidrule(lr){5-7}
& IF=50 & IF=100 & IF=200 & IF=50 & IF=100 & IF=200 \\
\midrule
\texttt{CAR} (Ours)                       & 75.87 & 72.97 & 65.79 & 81.11 & 79.37 & 73.11 \\
\quad + ReMix~\cite{chou2020remix}       & 76.53 & 73.62 & 66.50 & 82.03 & 80.37 & 74.34 \\
\quad + MetaSAug~\cite{li2021metasaug}   & 76.96 & 74.07 & 67.35 & 82.11 & 81.07 & 74.31 \\
\quad + CMO~\cite{park2022majority}      & 77.52 & 74.78 & 67.79 & 82.23 & 81.65 & 75.06 \\
\quad + SAFA~\cite{hong2022safa}         & 78.32 & 75.65 & 68.36 & 83.18 & 82.61 & 77.17 \\
\quad + ConCutMix~\cite{pan2024enhanced} & 77.89 & 75.84 & 68.90 & 83.03 & 82.12 & 76.73 \\
\bottomrule
\end{tabular}
\end{table}

\end{document}